\newtheorem{example}{Example}
\newtheorem{remark}{Remark}
\def\eqns#1{\begin{equation*}#1\end{equation*}}
\def\eqnl#1#2{\begin{equation}\label{#1}#2\end{equation}}
\def\eqnsa#1{\begin{subequations}\begin{align*}#1\end{align*}\end{subequations}}
\def\eqnmla#1#2{\begin{subequations}\label{#1}\begin{align}#2\end{align}\end{subequations}}
\def\one{\mathbf{1}}
\def\boE{\mathbf{E}}
\def\boF{\mathbf{F}}
\def\bsf{\bm{f}}
\def\bsF{\bm{F}}
\def\bsI{\bm{I}}
\def\bsK{\bm{K}}
\def\boL{\mathbf{L}}
\def\bsO{\bm{O}}
\def\bsP{\bm{P}}
\def\bsQ{\bm{Q}}
\def\bsR{\bm{R}}
\def\bsx{\bm{x}}
\def\bsy{\bm{y}}
\def\boX{\mathbf{X}}
\def\boY{\mathbf{Y}}
\def\bsvphi{\bm{\varphi}}
\def\calB{\mathcal{B}}
\def\calE{\mathcal{E}}
\def\calF{\mathcal{F}}
\def\calN{\mathcal{N}}
\def\bbN{\mathbb{N}}
\def\bbP{\mathbb{P}}
\def\bbR{\mathbb{R}}
\def\bbT{\mathbb{T}}
\def\d{\mathrm{d}}
\def\r{\mathrm{r}}
\def\u{\mathrm{u}}
\def\frakP{\mathfrak{P}}
\def\defeq{\doteq}
\def\cev#1{\reflectbox{\ensuremath{\vec{\reflectbox{\ensuremath{#1}}}}}}
\def\ind#1{\one_{#1}}
\def\scl{\dagger}
\def\given{\,|\,}
\def\AND{\qquad\text{and}\qquad}
\def\up{:}
\def\down{:}
\def\st{\,:\,}
\def\oP{\bar{P}}
\def\oQ{\bar{Q}}
\def\oR{\bar{R}}
\def\ninf#1{\|#1\|_{\infty}}
\def\tf{\varphi}
\def\bstf{\bsvphi}
\title{Smoothing and filtering with a class of outer measures}
\author{Jeremie Houssineau and Adrian N.\ Bishop}
\begin{document}

\maketitle

\begin{abstract}
Filtering and smoothing with a generalised representation of uncertainty is considered. Here, uncertainty is represented using a class of outer measures. It is shown how this representation of uncertainty can be propagated using outer-measure-type versions of Markov kernels and generalised Bayesian-like update equations. This leads to a system of generalised smoothing and filtering equations where integrals are replaced by supremums and probability density functions are replaced by positive functions with supremum equal to one. Interestingly, these equations retain most of the structure found in the classical Bayesian filtering framework. It is additionally shown that the Kalman filter recursion can be recovered from weaker assumptions on the available information on the corresponding hidden Markov model.
\end{abstract}

\begin{keywords}
  Outer measure, Information assimilation, Hidden Markov models
\end{keywords}

\begin{AMS}
  60A10,  	
  60J05,  	
  62L12  	
\end{AMS}

\section{Introduction}

The question of how to represent uncertainty is central when formulating any estimation, inference, or learning problem. This question has also long stirred debate among practitioners. Firstly, there was the frequentist versus Bayesian debate in early statistical estimation theory. Later, numerous attempts at ``generalising'' probabilistic concepts were derived and debated, such as fuzzy logic, imprecise probabilities, possibility theory, fuzzy random sets, and Dempster-Shafer theory \cite{Zadeh1965,Walley1991,Dempster1967,Shafer1976,Dubois1983,Yen1990,Friedman2001}. The proposed approach is closer in spirit to these latter methods, and assumes a specific structure that is general enough to cover most modelling needs and restrictive enough to enable the derivation of practical estimation algorithms. This approach is based on the fundamental measure-theoretic concept of an outer measure, which provides for a more relaxed manner of distributing probability mass. As explained in \cite{Fremlin2013}: €œThe idea of the outer measure of a set A is that it acts as an upper bound for the possible measure of A. This structure can in some sense capture standard probability theory, since a given outer measure can bound the probability mass of each measurable subset so finely that it collapses identically to a probability measure. By encompassing a broader spectrum of uncertainty, e.g. from pure randomness to totally non-random uncertainty, the presented estimation principle brings together the Bayesian and frequentist interpretation by simultaneously allowing for fixed randomness and evolving uncertainty based on the received information.

Practically, the proposed filtering/smoothing framework naturally accommodates a more relaxed model of the system dynamics, as well as the observed and prior information. This is achieved via the use of outer measures, and yields potentially more robust estimation algorithms that do not require all sources of uncertainty to be perfectly (and solely) described by strict probability distributions; e.g. Markov transition kernels in the case of the system dynamics. The language and nature of uncertainty may be important in certain applications. Closed-form recursive algorithms will be derived under this framework of outer measures for both filtering and smoothing, and using both forward and backward recursions (in time). An analogue of the classical Kalman filter recursion will also be derived under non-classical, and weaker, assumptions on the prior, dynamic, and observation models.

\section{Representation of uncertainty}

The objective in this section is to introduce a general representation of uncertainty based on \cite{Houssineau2015,Houssineau2016_dataAssimilation}, that relaxes the standard approach of defining probability distributions over the state space. The proposed approach will build on \cite{Houssineau2015,Houssineau2016_dataAssimilation} to enable filtering and smoothing recursions to be derived. The time is discrete and assumed to take integer values between $0$ and $T$ so that the set $\bbT$ of all time steps is defined as $\{0,\dots,T\}$. The state space at time $t\in\bbT$ is denoted $\boX_t$ and is assumed to be a subset of $\bbR^d$ for some $d > 0$. We first consider the problem of representing uncertainty on a single state space $\boE$, which might be $\boX_t$ at any time $t \in \bbT$, before tackling the case of the product space $\boX_{0:T} = \boX_0 \times \dots \times \boX_T$. The sets $\boE$ and $\boX_0,\dots,\boX_T$ are endowed with their respective Borel $\sigma$-algebra $\calB(\boE)$ and $\calB(\boX_0),\dots,\calB(\boX_T)$.

\subsection{On a single state space}

Instead of considering a probability distribution on $\boE$, we consider a probability measure\footnote{Measure-theoretic questions associated with the introduction of a measure on a set of functions are discussed in \cite{Houssineau2015}.} $P$ on the set $\boL(\boE)$ of measurable functions\footnote{The following convention is considered: the term \emph{mapping} or \emph{map} is used whatever the domain and co-domain while the term \emph{function} is reserved for real-valued maps, i.e.\ for maps whose co-domain is a subset of $\bbR$.} from $\boE$ to $\bbR^+$ with supremum equal to $1$, describing some knowledge about the system of interest in the following way: the underlying probability distribution $p$ on $\boE$, if any, is \emph{dominated} by $\oP$, i.e.\ it satisfies
\eqnl{eq:outerMeas}{
p(\tf) \leq \oP(\tf) = \int \ninf{\tf \cdot f} P(\d f),
}
for any $\tf$ in the set $\boL^{\infty}(\boE)$ of positive bounded measurable functions on $\boE$, where $\ninf{\cdot}$ is the supremum norm and where $\tf \cdot f$ denotes the point-wise product between $\tf$ and $f$, i.e.\ $(\tf \cdot f)(x) = \tf(x) f(x)$ for any $x \in \boE$. The reason for cautioning the existence of $p$, captured by the ``if any'' in the previous sentence, will be clarified later in this section. In particular, it follows from considering $\tf = \ind{B}$ in \eqref{eq:outerMeas} for some $B \in \calB(\boE)$ that
\eqnl{eq:outerMeas2}{
p(B) \leq \int \sup_{x \in B} f(x) P(\d f).
}
The set function $B \mapsto \oP(\ind{B})$ is a type of outer probability measure, that is a set function that gives value $0$ to the empty set, value $1$ to the whole space and that is monotone and countably sub-additive. The main difference with a probability measure being that the usual additivity assumption is replaced by sub-additivity. In the right hand side (r.h.s.) of \eqref{eq:outerMeas2}, the integral is additive by definition so it is the supremum that is responsible for the sub-additivity. For the sake of simplicity, we say that $\oP$ is an outer measure.

Defining measures on measurable subsets as in \eqref{eq:outerMeas2} or on measurable functions as in \eqref{eq:outerMeas} is equivalent \cite{Schwartz1973}, however it is not the case for outer measures because of their sub-additivity. Defining outer measures on measurable functions is then preferred since it is more general. Results with subsets can be recovered by considering an indicator function as in \eqref{eq:outerMeas2}.

The following examples aim at providing insight into the functions in $\boL(\boE)$ as well as into the difference between having one or several such functions in the support of $P$ (that is, informally, in the subset of $\boL(\boE)$ of functions to which $P$ gives positive probability).

\begin{example}
If $P = \delta_f$ for some function $f \in \boL(\boE)$ then it holds that
\eqns{
p(B) \leq \oP(\ind{B}) = \sup_{x \in B} f(x)
}
so that $f$ can be interpreted as a \emph{possibility function}\footnote{Possibility functions are called ``possibility distributions'' in the context of possibility theory \cite{Dubois2015} and the associated outer measures are called ``possibility measures''.} since it gives, through the supremum, an upper bound for the probability mass that $p$ can possibly give to $B$.
\end{example}

\begin{example}
Let $A$ and $A'$ be disjoint subsets of $\boE$ and consider the two following modelling choices:
\begin{enumerate}
\item $P = \delta_{\ind{A \cup A'}}$ which only implies that $p(B) = 0$ if $B$ is disjoint from both $A$ and $A'$, i.e.\ all the probability mass of $p$ is within $A \cup A'$ but it could as well be all in $A$ or all in $A'$.
\item $P = a \delta_{\ind{A}} + (1-a) \delta_{\ind{A'}}$ for some $a \in (0,1)$ which implies $p(A) = a$ and $p(A') = (1-a)$, i.e.\ the probability can be distributed in any way within $A$ and $A'$ as long as the total mass in $A$ is equal to $a$.
\end{enumerate}
\end{example}

\begin{example}
If the space $\boE$ is discrete, say equal to $\bbN$, then $p$ can be characterised by its probability mass function $m$ via
\eqns{
m(n) \leq \int f(n) P(\d f)
}
for any $n \in \bbN$. In particular, if $P$ is of the form
\eqns{
P = \sum_{i \in I} w_i \delta_{f_i}
}
for indexed family $\{(w_i,f_i)\}_{i \in I}$ of positively-weighted functions in $\boL(\boE)$ then
\eqns{
m(n) \leq \sum_{i \in I} w_i f_i(n).
}
We can notice that if $I = \bbN$ and if $f_i = \ind{n}$ then it holds that $m(n) \leq w_n$ for any $n \in \bbN$. However, since both $m(n)$ and $w_n$ sum to $1$ by definition, it follows that $m(n) = w_n$; in other words $P$ is equivalent to the probability measure $p$. The same approach can be used for uncountable spaces at the cost of measure theoretic notations.
\end{example}

Many distributions that are commonly used in statistics have an analogous possibility function. The interpretation of the two is however different since one fully characterises randomness while the other only suggests a given shape for the uncertainty. Notice that, as opposed to probability density functions on continuous spaces, possibility functions do not require a reference measure to be defined in order to be written as a function.

\begin{definition}
\label{def:GaussianPossibility}
A function $f$ in $\boL(\bbR^d)$ is said to be a \emph{Gaussian possibility function} if it takes the form
\eqns{
f(x) = \bar\calN(x; m,\bsP) \defeq \exp\Big(-\frac{1}{2} (x-m)^T \bsP^{-1}(x-m)\Big),
}
for some $m \in \bbR^d$ and for some $d\times d$ positive-definite matrix $\bsP$ with real coefficients.
\end{definition}

We refer to the parameters $m$ and $\bsP$ of $\bar\calN(\cdot; m,\bsP)$ as the ``mean'' and the ``spread'' of the Gaussian possibility function. In this context, referring to $m$ as the mean may be viewed as a slight, but useful, abuse of terminology. 

The distribution $P$ on $\boL(\boE)$ can encode information in a Bayesian and/or a frequentist way. If the embedded information relates to a non-random phenomenon, either as a realisation of a random variable or as a fully non-random parameter, then there is no underlying probability measure and $P$ describes the uncertainty in a Bayesian sense. However, if the embedded information relates to a random variable (an actual one, not a realisation of it), then the underlying probability measure exists and is unique, and $P$ describes (partially-unknown) randomness as in the frequentist interpretation.

In this context, it is better to understand random variables as solely describing randomness with another concept needed to describe uncertainty in general. The concept of \emph{uncertain variable} is therefore introduced in order to describe a variable about which $P$ gives information. Two sample spaces are required, the first, denoted $(\Omega_{\r},\calF)$, describes randomness and the second, denoted $\Omega_{\u}$, describes non-random uncertainty. Only the former is endowed with a $\sigma$-algebra $\calF$ since non-random events (subsets of $\Omega_{\u}$) do not need to be assigned a probability mass, it is just unknown whether or not they have happened. The sample space $(\Omega_{\r},\calF)$ is endowed with a probability measure $\bbP(\cdot\given \omega_{\u})$ conditioned on the state $\omega_{\u} \in \Omega_{\u}$ of all non-random phenomenon. The sample space $\Omega_{\u}$ can be seen as a space $\Theta$ describing (possibly unknown) parameters, so that the probability measure $\bbP(\cdot\given \theta)$ with $\theta \in \Theta$ can simply be seen as a parametrised distribution. This separation of random and non-random phenomena imply that degenerate random variables are not considered as random variables but as parameters. Uncertain variables can now be defined straightforwardly by using the sample spaces $(\Omega_{\r},\calF,\bbP)$ and $\Omega_{\u}$.

\begin{definition}
\label{def:uncertainVariable}
An uncertain variable $X$ on a measurable space $(\boE,\calE)$ is a mapping between the product sample space $\Omega = \Omega_{\r}\times \Omega_{\u}$ and $\boE$ such that the mapping $\omega_{\r} \mapsto X(\omega_{\r},\omega_{\u})$ is measurable for every $\omega_{\u} \in \Omega_{\u}$.
\end{definition}

An uncertain variable $X$ reduces to a random variable when the mapping $\omega_{\u} \mapsto X(\omega_{\r},\omega_{\u})$ is constant. Alternatively, if the mapping $\omega_{\r} \mapsto X(\omega_{\r},\omega_{\u})$ is constant then $X$ is a non-random uncertain variable and the measurability condition in \cref{def:uncertainVariable} is always satisfied.

If an uncertain variable is not a random variable, then there is no underlying probability measure on $\boE$ that would be dominated by the outer measure $\oP$, instead, the latter describes the uncertain variable directly, e.g.\ the scalar $\oP(\ind{B})$ gauges how likely the event $X \in B$ is for any $B \in \calB(\boE)$. However, it is sometimes useful to consider a (non-unique) probability measure $p$ dominated by $\oP$ in order to understand how operations on the state space $\boE$ affect the outer measure $\oP$, as in \cref{ssec:jointOuterMeas} below.

There is a natural transfer from randomness to non-random uncertainty as random phenomena take place and induce uncertainty about the corresponding realisations. For instance, if a coin is being flipped then it is usual to consider the output as random, however there is no more randomness once the coin has landed, and only uncertainty is left (at least until the outcome is observed).

Note that an outer measure $\oP$ on $\boE$ describing what is known about an uncertain variable $X : \Omega \to \boE$ can be pulled back \cite{Houssineau2016_dataAssimilation} onto $\Omega$ (see also \cref{sec:operations} below for the concept of pullback measure). The resulting outer measure $\frakP$ can be seen as an extrinsic description of the uncertainty whereas $\bbP$ is an intrinsic characterisation of the randomness, the former changes when more information is acquired while the latter never changes. The whole sample space could then be seen as $(\Omega,\frakP)$, where $\frakP$ is a ``generalised'' probability which does not satisfy Kolmogorov's third axiom ($\sigma$-additivity). However, we do not emphasize this interpretation.

Modelling single-variate/unconditional uncertainty as in this section can be sufficient for many applications, e.g.\ in expert systems \cite{Zadeh1983} or to derive data-association formulas \cite{Houssineau2015} for multi-object representations~\cite{Houssineau2016_population}. However, conditioning is key to express smoothing and filtering equations so that the proposed modelling has to be extended further.

\subsection{On a joint state space}
\label{ssec:jointOuterMeas}

The focus is now on the product space $\boX_{0:T}$ and most of the results will be stated accordingly. Yet, when introducing notations and concepts that apply more broadly, the set $\boE$ will be reused together with another set $\boF$, also endowed with its Borel $\sigma$-algebra. Equation~\eqref{eq:outerMeas} can be extended to the product space $\boX_{0:T}$ as
\eqnl{eq:jointOuterMeas}{
p_{0:T}(\bstf) \leq \oP_{0:T}(\bstf) =  \int \ninf{\bstf \cdot f} P_{0:T}(\d f),
}
for any function $\bstf$ in $\boL^{\infty}(\boX_{0:T})$, where $p_{0:T}$ and $P_{0:T}$ are probability measures on $\boX_{0:T}$ and $\boL(\boX_{0:T})$ respectively. The possibility $f$ can be thought of as a \emph{joint possibility} since it jointly applies throughout the different state spaces $\boX_0$ to $\boX_T$. However, the outer measure $\oP_{0:T}$ will prove to be insufficient in practice. For instance, information at $t=1$ might be conditional on the state at $t=0$, information at $t=2$ might be conditional at both the states at $t=0$ and $t=1$, etc. In this case, we have to introduce a more general type of outer measure. For this purpose, let $P_t( \cdot \given x_{0:t-1})$ be the conditional distribution on $\boL(\boX_t)$ at time $t \in \bbT$ given the states\footnote{The term $x_{0:t-1}$ stands for the sequence $(x_0,\dots,x_{t-1})$, which is the empty sequence when $t=0$.} $x_{0:t-1}$ at all previous times and let $p_t(\cdot \given x_{0:t-1})$ be a probability distribution on $\boX_t$ verifying
\eqns{
p_t(\tf \given x_{0:t-1}) \leq \int \ninf{\tf \cdot f} P_t(\d f \given x_{0:t-1}),
}
for any $\tf \in \boL^{\infty}(\boX_t)$. Now, let $\oP_t(\bstf)$ be a \emph{conditional outer measure} defined for any $\bstf \in \boL^{\infty}(\boX_{0:t})$ as
\eqns{
\oP_t(\bstf)(x_{0:t-1}) = \int \sup_{x_t \in \boX_t} (f(x_t) \bstf(x_{0:t})) P_t(\d f \given x_{0:t-1}),
}
for all $x_{0:t-1} \in \boX_{0:t-1}$, then we have the result of the following theorem about the joint probability $p_{0:T}$, in which $\oP\oP'(\bstf)$ denotes $\oP(\oP'(\bstf))$ for any outer measures $\oP$ on $\boE$, any conditional outer measure $\oP'(\cdot)(x)$ on $\boF$ defined for all $x \in \boE$ and for any $\bstf \in \boL^{\infty}(\boE\times \boF)$.

\begin{theorem}
\label{thm:condOuterMeasure}
The outer measure $\oP_{0\up T}$ induced by the family $\{P_t(\cdot \given x_{0:t-1})\}_{t\in\bbT}$ of probability distributions is characterised by
\eqnl{eq:condOuterMeas}{
\oP_{0\up T}(\bstf) = \oP_0 \dots \oP_T(\bstf),
}
for any $\bstf \in \boL^{\infty}(\boX_{0:T})$.
\end{theorem}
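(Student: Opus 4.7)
The plan is to proceed by induction on $T$. For the base case $T = 0$, the family $\{P_t(\cdot \given x_{0:t-1})\}_{t\in\bbT}$ reduces to the single unconditional distribution $P_0$, and the induced outer measure coincides with $\oP_0(\bstf) = \int \ninf{\bstf \cdot f} P_0(\d f)$ as defined in~\eqref{eq:outerMeas}, matching the right-hand side of~\eqref{eq:condOuterMeas}.

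For the inductive step, I would fix a joint probability measure $p_{0:T}$ whose conditionals $p_t(\cdot \given x_{0:t-1})$ are each dominated by the corresponding $P_t(\cdot \given x_{0:t-1})$, and apply the chain rule
\eqns{
p_{0:T}(\bstf) = \int p_{0:T-1}(\d x_{0:T-1}) \int \bstf(x_{0:T})\, p_T(\d x_T \given x_{0:T-1}).
}
The inner integral is bounded using the defining inequality for $P_T(\cdot \given x_{0:T-1})$ applied to the slice $\bstf(x_{0:T-1},\cdot) \in \boL^{\infty}(\boX_T)$, which by construction is exactly $\oP_T(\bstf)(x_{0:T-1})$. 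Next, I would verify that the map $x_{0:T-1} \mapsto \oP_T(\bstf)(x_{0:T-1})$ lies in $\boL^{\infty}(\boX_{0:T-1})$: it is bounded by $\ninf{\bstf}$ since $\ninf{f} = 1$ for every $f \in \boL(\boX_T)$, and it is measurable in $x_{0:T-1}$ by the joint measurability of the conditional kernel $P_T(\cdot \given \cdot)$ together with Fubini's theorem. Applying the inductive hypothesis to the function $\oP_T(\bstf)$ then yields $\oP_0\, \oP_1 \dots \oP_{T-1}(\oP_T(\bstf)) = \oP_0\, \oP_1 \dots \oP_T(\bstf)$, which is the desired right-hand side of~\eqref{eq:condOuterMeas}.

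The main obstacle I anticipate is not in the algebraic unwinding, which is a mechanical peeling off of conditional layers, but rather in rigorously establishing the measurability and boundedness of each intermediate function $\oP_t \dots \oP_T(\bstf)$ so that the iterated expression $\oP_0 \dots \oP_T$ is well-defined at every stage. This requires the conditional kernels $P_t(\cdot \given x_{0:t-1})$ to form measurable families in $x_{0:t-1}$ and a Fubini-style interchange of integration and pointwise supremum at each level; once these technical points are in place, the induction closes routinely.
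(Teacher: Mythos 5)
Your proposal is correct and follows essentially the same route as the paper's proof: induction on $T$, disintegration of a dominated joint measure $p_{0:T}$ via the chain rule, bounding the innermost conditional by $\oP_T(\bstf)(x_{0:T-1})$, and closing with the inductive hypothesis (the paper merely applies the hypothesis and the pointwise bound in the opposite order, which is immaterial given monotonicity of $\oP_{0\up T-1}$). Your added attention to the measurability and boundedness of the intermediate functions $\oP_t\dots\oP_T(\bstf)$ is a point the paper leaves implicit, but it does not change the substance of the argument.
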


Before proving \cref{thm:condOuterMeasure}, it is useful to note that the way in which the conditioning is defined in the family $\{P_t(\cdot \given x_{0:t-1})\}_{t\in\bbT}$ matters for the corresponding outer measure. If we consider another family of distributions of the form $\{P'_t(\cdot \given x_{t+1:T})\}_{t\in\bbT}$, then the associated outer measure would be
\eqns{
\oP_{T\down 0}(\bstf) = \oP'_T \dots \oP'_0(\bstf),
}
which differs from $\oP_{0\up T}(\bstf)$ in general. This is one of the properties of probability measures that does not transfer to outer probability measures. In the context of filtering, we will be mostly interested in conditioning with respect to the past so that $\oP_{0\up T}$ will be used predominantly.

\begin{proof}
The result is obvious for $T = 0$. Let $p_{0:T}$ denote a probability distribution on $\boX_{0:T}$ that is induced by a family $\{p_t(\cdot \given x_{0:t-1})\}_{t\in\bbT}$ of probability distribution dominated by $\{\oP_t(\cdot \given x_{0:t-1})\}_{t\in\bbT}$ and assume that the results holds for the set $\{0, \dots, T-1\}$, then the information can be summed up through the two following inequalities:
\eqnsa{
p_{0:T-1}(\bstf) & \leq \oP_{0\up T-1}(\bstf), \\
p_T(\tf \given x_{0:T-1}) & \leq \int \ninf{\tf \cdot f_T} P_T(\d f_T \given x_{0:T-1}),
}
which hold for any $\bstf \in \boL^{\infty}(\boX_{0:T-1})$, any $\tf \in \boL^{\infty}(\boX_T)$ and any $x_{0:T-1} \in \boX_{0:T-1}$. It follows that
\eqnsa{
p_{0:T}(\bstf') & = \int \bstf'(x_{0:T}) p_T(\d x_T \given x_{0:T-1}) p_{0:T-1}(\d x_{0:T-1}) \\
& \leq \oP_{0\up T-1}\big(p_T\big( \bstf' \given \cdot \big)\big)
}
for any $\bstf' \in \boL^{\infty}(\boX_{0:T})$, so that
\eqnsa{
p_{0:T}(\bstf') & \leq \oP_{0\up T-1}\bigg( \int \sup_{x_T \in \boX_T} \big( \bstf'(\cdot,x_T) f(x_T) \big) P_T(\d f \given \cdot) \bigg) \\
&= \oP_{0\up T-1}(\oP_T(\bstf')) = \oP_{0\up T}(\bstf').
}
This concludes the proof of the theorem.
\end{proof}

\begin{remark}
\label{rem:independentOuterMeas}
If for any $t \in \bbT$, the distribution $P_t(\cdot \given x_{0:t-1})$ does not actually depend on $x_{0:t-1}$ and if a subset $B$ of the form $B = B_0 \times \dots \times B_T$ is considered then \eqref{eq:condOuterMeas} collapses to
\eqnl{eq:rem:independentOuterMeas}{
\oP_{0\up T}(\ind{B}) = \oP_0(\ind{B_0}) \dots \oP_T(\ind{B_T}).
}
In this case, for any separable function $\bstf(\bsx) = \bstf_1(\bsx_1)\dots\bstf_T(\bsx_T)$ in $\boL^{\infty}(\boX_t)$, it holds that $\oP_{0\up T}(\bstf) = \oP_{T\down 0}(\bstf)$ since the individual terms in \eqref{eq:condOuterMeas} can now be commuted.
\end{remark}

\begin{definition}
Let $X$ and $X'$ be two uncertain variables on the respective spaces $\boE$ and $\boF$ and let $\oP$ be an outer measure describing information about the joint $(X,X')$ on the product space $\boE \times \boF$. If there exist outer measures $\oP_X$ and $\oP_{X'}$ such that for every separable function $\bstf = \tf \times \tf'$ in $\boL^{\infty}(\boE \times \boF)$ it holds that
\eqns{
\oP(\bstf) = \oP_X(\tf) \oP_{X'}(\tf'),
}
then $X$ and $X'$ are said to be \emph{weakly independent}.
\end{definition}

If two uncertain variables are at least partially random then their weak independence is unrelated to their statistical independence. Weak independence only means that the relation between the two uncertain variables is unknown. It is therefore meaningful to introduce a third concept, \emph{strong independence}, to describe the case where two possibly-random uncertain variables are (statistically) independent as well as weakly independent. To sum up, in terms of independence
\eqns{
\text{strongly} \iff \text{statistically and weakly},
}
however neither does statistical independence imply weak independence nor the other way around.

\begin{example}
If for any time $t \in \bbT$, it holds that $P_t(\cdot \given x_{0:t-1}) = \delta_{f_t(\cdot \given x_{0:t-1})}$ where $f_t$ is a \emph{conditional possibility}, i.e.\ it is such that $f_t(\cdot \given x_{0:t-1}) \in \boL(\boX_t)$ for any $x_{0:t-1} \in \boX_{0:t-1}$, then
\eqns{
\oP_{0:T}(\bstf) = \ninf{ \bstf \cdot f_{0:T}(\bsx)}
}
for any $\bstf \in \boL^{\infty}(\boX_{0:T})$, where $f_{0:T}$ is a joint possibility in $\boL(\boX_{0:T})$ defined as
\eqns{
f_{0:T} : x_{0:T} \mapsto f_0(x_0) f_1(x_1 \given x_0) \dots f_T(x_T \given x_{0:T-1}).
}
\end{example}

\subsection{Hidden Markov models and outer measures}

Hidden Markov models are often considered when dealing with estimation for dynamical systems \cite{Cappe2005}. It is therefore of interest to generalise the concept of Markov chain to outer measures describing collections of uncertain variables. We propose the following approach: the uncertainty about a collection $\{X_t\}_{t \in \bbN}$ of uncertain variables has the Markov property if for any $t \in \bbN$ it holds that
\eqns{
\oP_t (\tf)(x_{0:t-1}) = \oP_t(\tf)(x_{t-1})
}
for all $x_{0:t-1} \in \boX_{0:t-1}$ and for all $\tf \in \boX_t$. Note that in this case, the property is more about the available knowledge than about the uncertain variables themselves. For instance, at the final time $T$, we might be given information that the physical system has never been twice in the same state, in which case the uncertainty would cease to have the Markov property.

\subsubsection*{Principle}

It is assumed that the system of interest can be characterised by a collection $\{X_t\}_{t\in\bbT}$ of uncertain states and its observation is described by a collection $\{Y_t\}_{t\in\bbT}$ of observations on the space $\boY_t$. This can be summed up by
\eqnl{eq:estimation}{
\begin{cases}
X_t = F_t(X_{t-1},V_t) \\
Y_t = O_t(X_t)
\end{cases}
}
where $F_t$ and $O_t$ are respectively the state transition and the observation function at time $t$ and where $\{V_t\}_{t\in\bbT}$ is a sequence of strongly independent uncertain variables. In some cases, the uncertain variable $V_t$ will be equivalent to a random variable, e.g.\ when describing the motion of particle in turbulent water, however, in many other cases, it will represent an unknown but non-random change in the model, e.g.\ a plane manoeuvring or a pedestrian changing direction.

The mechanism behind the acquisition of information through the observation process is different. We assume that the observation is non-random\footnote{This assumption is not crucial, it is only used to simplify the following statements. Alternatively, an additional Markov kernel $S_t(x_t,\cdot)$ on $\boL(\boY_t)$ can be defined for any $x_t \in \boX_t$ in order to model the knowledge about a perturbed observation function~$O_t(\cdot,W_t)$ where $\{W_t\}_{t \in \bbT}$ is a sequences of strongly independent uncertain variables that are independent of $\{V_t\}_{t\in\bbT}$.} but the usual assumption that $Y_t$ is received directly is not considered. Instead, it is assumed that what is received is information about $Y_t$ rather than $Y_t$ itself. Information about $Y_t$ is given under the form of an outer measure on $\boY_t$ (this representation will be formalised subsequently). For example, the observation $Y_t$ may be known to be in some subset of the observation space (e.g.\ $Y_t \in A$ where $A$ is one or several pixels of a camera) or information about $Y_t$ may be given more indirectly under the form of a natural language statement. Numerous other modelling examples can be considered.

Since we have assumed that all the information that will be available about the collection $\{X_t,Y_t\}_{t \in \bbT}$ will have the Markov property, the overall choice of model is still referred to as a hidden Markov model.

\subsubsection*{Formalisation}

The uncertainty about the collection $\{X_t\}_{t\in\bbT}$ has the Markov property by construction since $X_t$ only depends on $X_{t-1}$ and since the uncertainty induced by $V_t$ is independent of the one at previous times. The transition can therefore be encoded into a Markov kernel $Q_t(x_{t-1},\cdot)$ on $\boL(\boX_t)$, which contains information on $\boX_t$ conditional on the state in $x_{t-1} \in \boX_{t-1}$ but not directly on $\boX_{t-1}$ or any previous state space.

The information about the observation $Y_t$ at time $t$ is assumed to be weakly independent of the information at other times and is given under the form of a probability measure $R_t$ on $\boL(\boY_t)$, whose projection onto $\boL(\boX_t)$ by $O_t$ is the pullback measure $O_t^* R_t$ (it can also be assumed for simplicity that $\boY_t = \boX_t$ and $O_t$ is the identity).

The initial state $X_0$ is an uncertain variable described by the outer measure $\oP_0$ induced by the distribution $P_0$ on $\boL(\boX_0)$.

\section{Translating operations of probability theory to outer measures}
\label{sec:operations}

The equivalent of the standard operations of probability theory have to be derived for the considered class of outer measures in order to generalise the usual filtering and smoothing algorithms. Other useful operations that do not exist under the usual framework are also introduced. We start with the equivalent of the push-forward $\xi_* p = p(\xi^{-1}(\cdot))$ of a probability measure $p$ on $\boE$ by a measurable mapping $\xi$ from $\boE$ to another set $\boF$. For a given outer measure $\oP$, the objective is to characterise the outer measure $\oP'$ verifying
\eqnl{eq:condPushforward}{
\oP'(\ind{B}) = \oP(\ind{\xi^{-1}[B]}),
}
for any appropriate subset $B$ of $\boF$, where the use of square brackets in the expression $\xi^{-1}[B]$ of the inverse image of the subset $B$ by $\xi$ emphasizes that the result is a set. The solution is given in the next proposition.

\begin{proposition}[From {\cite[Proposition 3]{Houssineau2016_dataAssimilation}}]
Let $P$ be a distribution on $\boL(\boE)$ and let $\xi$ be a measurable mapping from $(\boE,\calB(\boE))$ to $(\boF,\calB(\boF))$, then the probability distribution $P'$ which implies that \eqref{eq:condPushforward} holds is equal to the push-forward $\vec{\xi}_* P$ where $\vec{\xi}$ is a mapping from $\boL(\boE)$ to $\boL(\boF)$ defined as
\eqns{
\vec{\xi}(f) : x \mapsto \sup_{\xi^{-1}[x]} f
}
for any $f \in \boL(\boE)$.
\end{proposition}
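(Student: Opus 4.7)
The plan is to unpack both sides of the claimed identity using the definition of the outer measure given in \eqref{eq:outerMeas} and then reduce everything to an elementary identity about iterated suprema. Concretely, I would start from the right-hand side and write
\eqns{
\oP(\ind{\xi^{-1}[B]}) = \int \sup_{x \in \xi^{-1}[B]} f(x)\, P(\d f),
}
which follows by plugging $\tf = \ind{\xi^{-1}[B]}$ into \eqref{eq:outerMeas} and noting that $\ind{\xi^{-1}[B]}(x) f(x) = f(x)$ for $x$ in the preimage.

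Next, I would treat the left-hand side by first observing that, for $P'$ to be a candidate distribution on $\boL(\boF)$, the mapping $\vec{\xi}(f): y \mapsto \sup_{x \in \xi^{-1}[y]} f(x)$ must indeed land in $\boL(\boF)$; this amounts to checking measurability of the fibre-wise supremum and the fact that $\ninf{\vec{\xi}(f)} = \ninf{f} = 1$, both of which I would invoke from the measure-theoretic setup of \cite{Houssineau2015} (that reference also ensures $\vec{\xi}$ itself is measurable, so that $\vec{\xi}_* P$ is a bona fide probability measure on $\boL(\boF)$). Granted this, the change-of-variables formula for push-forwards yields
\eqnsa{
\oP'(\ind{B}) &= \int \sup_{y \in B} g(y)\, (\vec{\xi}_* P)(\d g) \\
&= \int \sup_{y \in B} \vec{\xi}(f)(y)\, P(\d f) \\
&= \int \sup_{y \in B}\, \sup_{x \in \xi^{-1}[y]} f(x)\, P(\d f).
}

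The final step is the elementary identity
\eqns{
\sup_{y \in B}\, \sup_{x \in \xi^{-1}[y]} f(x) = \sup_{x \in \xi^{-1}[B]} f(x),
}
which is immediate because $\{\xi^{-1}[y]\}_{y \in B}$ partitions $\xi^{-1}[B]$ (empty fibres contribute $\sup \emptyset = 0$, which is harmless as $f \geq 0$ ensures the overall supremum is attained on the non-empty fibres). Combining this with the previous displays gives exactly $\oP(\ind{\xi^{-1}[B]})$, completing the proof.

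The only substantive obstacle is the measurability bookkeeping: one must argue that $\vec{\xi}$ is a measurable map between the $\sigma$-algebras on $\boL(\boE)$ and $\boL(\boF)$ used to define $P$ and $P'$, and that $\vec{\xi}(f)$ is measurable on $\boF$ for each $f$. Everything else is a one-line change of variables followed by a supremum-swap, so the proof should be quite short once these technicalities are offloaded to \cite{Houssineau2015}, exactly as the statement of the proposition already suggests.
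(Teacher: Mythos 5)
The paper does not actually prove this proposition: it is imported verbatim from \cite[Proposition 3]{Houssineau2016_dataAssimilation}, so there is no in-paper argument to compare yours against. Your verification is correct and is the natural one: plugging $\tf=\ind{B}$ into \eqref{eq:outerMeas}, applying the change-of-variables formula for the push-forward $\vec{\xi}_*P$, and collapsing the iterated supremum via $\xi^{-1}[B]=\bigcup_{y\in B}\xi^{-1}[y]$; the normalisation check $\ninf{\vec{\xi}(f)}=\sup_{\xi^{-1}[\boF]}f=1$ and the deferral of the measurability of the fibre-wise supremum to \cite{Houssineau2015} are both appropriate. The only caveat is that you verify that $\vec{\xi}_*P$ \emph{satisfies} \eqref{eq:condPushforward} rather than that it is \emph{the} distribution doing so; since distinct distributions on $\boL(\boF)$ can induce the same outer measure on indicator functions, the definite article in the statement should be read as designating the canonical choice, which is what your argument establishes.
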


The term $\xi^{-1}[x]$, which is shorthand for $\xi^{-1}[\{x\}]$, is a set which is a singleton if $\xi$ is bijective. As a consequence, the simplest examples of this equivalent of push-forward are found when $\xi$ is bijective as explained in the following remark. 

\begin{remark}
If $\xi$ is bijective then the expression of $\vec{\xi}$ simplifies to $\vec{\xi}(f) = f \circ \xi^{-1}$ for any $f \in \boL(\boE)$. Therefore, $P'$ gives mass $P(\d f)$ to the function $f \circ \xi^{-1}$, so that, for instance, if $\xi$ is the transformation from Cartesian coordinates in $\boE = \bbR^2\setminus \{(0,0)\}$ to polar coordinates systems in $\boF = [0,2\pi) \times (0,\infty)$ and if $f$ is the indicator of the disk $\{(x,y) \in \boE \st \sqrt{x^2+y^2} \leq 1 \}$ then $f \circ \xi^{-1}$ is simply the indicator of the set $\{ (a,r) \in \boF \st r \leq 1 \}$.
\end{remark}

Very often, the interest lies in non-bijective mappings with the simple case of a projection being studied in the following example.

\begin{example}
\label{ex:projection}
If $\boE = \boX_{t-1} \times \boX_t$ and $\boF = \boX_t$ for some time $t \in \{1,\dots,T\}$ and if $\xi$ is the canonical projection $(x_{t-1},x_t) \mapsto x_t$, then
\eqns{
\vec{\xi}(f) : x_t \mapsto \sup_{x \in \boX_{t-1}} f(x,x_t)
}
for any $f \in \boL(\boX_{t-1} \times \boX_t)$. This operation can be seen as marginalisation for possibility functions. As will be practically verified later, operations for possibility functions are the same as for probability density functions except that integrals are replaced by supremums. The consequence for the outer measures $\oP'$ is that, for any $\tf \in \boL^{\infty}(\boX_t)$, it holds that
\eqns{
\oP'(\tf) = \int \sup_{(x',x) \in \boX_{t-1} \times \boX_t} \big( \tf(x) f(x',x) \big) P(\d f),
}
which can be written as $\oP'(\tf) = \oP(\tf)$ by seeing $\tf$ as the function on $\boX_{t-1} \times \boX_t$ such that $\tf : (x',x) \mapsto \tf(x)$ in the r.h.s.\ of the equality. This abuse of notations will be used when there is no possible confusion. If $\oP$ is the single-possibility outer measure verifying $P = \delta_{f_{t-1,t}}$ for some $f_{t-1,t} \in \boL(\boX_{t-1} \times \boX_t)$ then, using obvious notational choices, it can be written that
\eqns{
f_t(x_t) = \sup_{x_{t-1} \in \boX_{t-1}} f_{t-1,t}(x_{t-1},x_t),
}
where $f_t$ is the possibility function such that $P' = \delta_{f_t}$. This result motivates the choice of performing operations directly on possibility functions in the case of single-possibility outer measures.
\end{example}

In standard probability theory, the inverse of the push-forward operation by $\xi$ applied to a probability measure $p$ is ill-defined since there might be several \emph{pullback} measures $p'$ verifying the identity $\xi_* p' = p$. However, all these probability measures are dominated by a given outer measure, so that the operation is meaningful for the latter. For a given outer measure $\oP$, the objective is to characterise the outer measure $\oP'$ verifying
\eqnl{eq:condPullback}{
\oP'(\ind{B}) = \oP(\ind{\xi[B]}),
}
for any appropriate subset $B$. Note that although $\xi[B]$ might not be measurable even if $B$ is, outer measures are not limited to measurable subsets. The solution is given in the next proposition.

\begin{proposition}[From {\cite[Proposition 4]{Houssineau2016_dataAssimilation}}]
Let $P$ be a distribution on $\boL(\boF)$ and let $\xi$ be a measurable mapping from $(\boE,\calB(\boE))$ to $(\boF,\calB(\boF))$, then the probability distribution $P'$ which implies that \eqref{eq:condPullback} holds is equal to the push-forward $\cev{\xi}_* P$ where $\cev{\xi}$ is the mapping
\eqnsa{
\cev{\xi} : \boL(\boF) & \to \boL(\boE) \\
f & \mapsto f \circ \xi.
}
\end{proposition}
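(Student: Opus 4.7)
The plan is to unwind the definition of the outer measure induced by the pushed-forward distribution $P' = \cev{\xi}_* P$, apply the change-of-variables formula for push-forwards on $\boL(\boF)\to\boL(\boE)$, and then use the elementary identity $\sup_{x\in B} g(\xi(x)) = \sup_{y\in\xi[B]} g(y)$ to collapse the supremum onto the image set $\xi[B]$.

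First I would verify the measurability prerequisites so that the push-forward $\cev{\xi}_* P$ is well defined: since $\xi$ is measurable, $f\circ\xi$ lies in $\boL(\boE)$ whenever $f\in\boL(\boF)$, and moreover $\ninf{f\circ\xi}\leq\ninf{f}\leq 1$, so $\cev{\xi}$ genuinely maps $\boL(\boF)$ into $\boL(\boE)$. One then checks that $\cev{\xi}$ is measurable with respect to the $\sigma$-algebras on these function spaces inherited from the framework in \cite{Houssineau2015}; this is routine because it is defined by pointwise composition.

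Next, using the definition in \eqref{eq:outerMeas}, for any $B\in\calB(\boE)$ I would compute
\eqns{
\oP'(\ind{B}) = \int \ninf{\ind{B}\cdot g}\, P'(\d g) = \int \sup_{x\in B} g(x)\, (\cev{\xi}_* P)(\d g) = \int \sup_{x\in B} (f\circ\xi)(x)\, P(\d f),
}
where the last equality is the change-of-variables formula for the push-forward under $\cev{\xi}$. The inner supremum satisfies
\eqns{
\sup_{x\in B} f(\xi(x)) = \sup_{y\in \xi[B]} f(y),
}
which is the key elementary step: every $x\in B$ yields some $y=\xi(x)\in \xi[B]$ and vice versa, so the two sets of values $\{f(\xi(x))\st x\in B\}$ and $\{f(y)\st y\in\xi[B]\}$ coincide. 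Substituting back gives
\eqns{
\oP'(\ind{B}) = \int \sup_{y\in \xi[B]} f(y)\, P(\d f) = \oP(\ind{\xi[B]}),
}
which is exactly \eqref{eq:condPullback}.

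The main obstacle I expect is bookkeeping rather than any deep idea: one must be careful that $\xi[B]$ need not be measurable and so the outer measure on the right-hand side is genuinely being used outside its restriction to Borel sets. This is precisely the setting anticipated in the remark just before the proposition, and the supremum-based definition of $\oP$ handles it without any modification, since $\sup_{y\in\xi[B]} f(y)$ makes sense for any subset $\xi[B]\subseteq\boF$. To round the argument off, I would note that by linearity and the usual monotone convergence arguments the same derivation goes through with $\ind{B}$ replaced by any $\tf\in\boL^{\infty}(\boE)$ of the form $\tf = \tf'\circ\xi$, yielding the functional version $\oP'(\tf'\circ\xi) = \oP(\tf')$ that will be convenient in the later filtering calculations.
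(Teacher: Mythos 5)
Your argument is correct and is the natural direct verification: the change-of-variables formula for the push-forward $\cev{\xi}_* P$ followed by the elementary identity $\sup_{x\in B} f(\xi(x)) = \sup_{y\in\xi[B]} f(y)$. The paper states this proposition without proof, importing it from the cited reference, so there is no in-paper argument to compare against; your computation is the expected one and the handling of possibly non-measurable $\xi[B]$ matches the remark preceding the proposition. Two small caveats on your side remarks: $\ninf{f\circ\xi}\leq 1$ does not by itself place $f\circ\xi$ in $\boL(\boE)$, since membership requires the supremum to equal one, which can fail when $\xi$ is not surjective (an imprecision already present in the proposition as stated); and for the same reason your closing identity $\oP'(\tf'\circ\xi)=\oP(\tf')$ holds only for surjective $\xi$ --- in general the computation yields $\oP(\tf'\cdot\ind{\xi[\boE]})$, the supremum being taken over the image of $\xi$ rather than all of $\boF$. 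Neither affects the main identity $\oP'(\ind{B})=\oP(\ind{\xi[B]})$, which your proof establishes.
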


For the sake of compactness, the push-forward $\cev{\xi}_* P$ can be denoted $\xi^* P$ since there is no possible confusion with existing notations. The distribution $\xi^* P$ is called the \emph{pullback} of $P$ by $\xi$. The case of a bijective $\xi$ is not so interesting here since the pullback is the same of the push-forward by the inverse in this case. The projection studied in \cref{ex:projection} is however of central interest since there is no equivalent for probability measures in this case. The following example is in the continuation of \cref{ex:projection}.

\begin{example}
\label{ex:inverseProjection}
If $\boE = \boX_{t-1} \times \boX_t$ and $\boF = \boX_t$ for some time $t \in \{1,\dots,T\}$ and if $\xi$ is the canonical projection $(x_{t-1},x_t) \mapsto x_t$, then
\eqns{
\cev{\xi}(f) : (x_{t-1},x_t) \mapsto f(\xi(x_{t-1},x_t)) = f(x_t)
}
for any $f \in \boL(\boX_t)$. This operation is indeed the inverse of marginalisation where no knowledge on the added state space is assumed. The consequence for the outer measures $\oP'$ is that
\eqns{
\oP'(\bstf) = \int \sup_{(x',x) \in \boX_{t-1} \times \boX_t} \big( \bstf(x',x) f(x) \big) P(\d f),
}
for any $\bstf \in \boL^{\infty}(\boX_{t-1} \times \boX_t)$. If $\oP$ is the single-possibility outer measure verifying $P = \delta_{f_t}$ for some $f_t \in \boL(\boX_t)$ then, using obvious notational choices, it can be written that
\eqns{
f_{t-1,t}(x_{t-1},x_t) = f_t(x_t),
}
where $f_{t-1,t}$ is the possibility function such that $P' = \delta_{f_{t-1,t}}$. It follows that $\sup f_{t-1,t}(x_{t-1},\cdot) = 1$ for any $x_{t-1} \in \boX_{t-1}$, which means that nothing is known on $\boX_{t-1}$ as expected.
\end{example}

Continuing in the spirit of examples~\ref{ex:projection} and \ref{ex:inverseProjection}, it can be verified that if $f$ is a possibility function on $\boX_{t-1} \times \boX_t$ then there exists a function $f_{t-1} \in \boL(\boX_{t-1})$ and a function $f_{t|t-1}(\cdot \given x')$ on $\boL(\boX_t)$ for every $x' \in \boX_{t-1}$ such that
\eqns{
f(x_{t-1},x_t) = f_{t|t-1}(x_t \given x_{t-1}) f_{t-1}(x_{t-1}),
}
for any $(x_{t-1},x_t) \in \boX_{t-1} \times \boX_t$, so that
\eqns{
f_{t|t-1}(x_t \given x_{t-1}) = \dfrac{f(x_{t-1},x_t)}{f_{t-1}(x_{t-1})} = \dfrac{f(x_{t-1},x_t)}{\displaystyle \sup_{x' \in \boX_{t-1}} f(x',x_t)}
}
Once again, the usual operations of probability theory can be seen to hold for possibility functions with integrals replaces by supremums. The analogue of Bayes' theorem on the state space for the considered class of outer measures, however, will be seen to take a different form in the next section.

\section{Information assimilation}

In order to describe the result of the combination of two strongly independent pieces of information, an additional notation has to be introduced: if $f \in \boL^{\infty}(\boE)$ then $f^{\scl} = f/\ninf{f} \in \boL(\boE)$ is the rescaled version of $f$ which has supremum $1$.

\begin{theorem}[From {\cite[Theorem 1]{Houssineau2016_dataAssimilation}}]
\label{thm:fusion}
Let $P$ and $P'$ be two probability measures on $\boL(\boE)$ describing respectively the uncertain variables $X$ and $X'$. If $X$ and $X'$ are strongly independent, then the posterior distribution $P \star P'$ based on $P$ and $P'$ can be expressed as
\eqnl{eq:fusion}{
P \star P'(F) \defeq \dfrac{ \int \ind{F}( (f\cdot f')^{\scl} ) \ninf{f \cdot f'} P(\d f) P'(\d f') }{ \int \ninf{f\cdot f'} P(\d f) P'(\d f') }
}
for any measurable subset $F$ of $\boL(\boE)$ as long as $P$ and $P'$ are \emph{compatible}, i.e.\ as long as the denominator is strictly positive.
\end{theorem}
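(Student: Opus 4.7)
The plan is a two-step reduction. First I would build the joint measure describing the pair $(X, X')$ using the strong independence hypothesis, and second I would condition on the fusion event $X = X'$ before projecting the resulting posterior onto $\boL(\boE)$ via the rescaled product map $(f, f') \mapsto (f \cdot f')^{\scl}$.

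For the first step, strong independence (statistical plus weak) implies that the joint uncertain variable $(X, X')$ on $\boE \times \boE$ is described by the product distribution $P \otimes P'$ on $\boL(\boE) \times \boL(\boE)$, so that each realised pair $(f, f')$ yields the joint possibility $(x, x') \mapsto f(x) f'(x')$ on $\boE \times \boE$; this is consistent with \cref{rem:independentOuterMeas}.

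For the second step, fusion imposes the constraint $X = X'$. Pulling the joint possibility back through the diagonal embedding $x \mapsto (x, x)$ produces the unnormalised possibility $x \mapsto f(x) f'(x)$, whose supremum is $\ninf{f \cdot f'}$. Rescaling yields the normalised possibility $(f \cdot f')^{\scl} \in \boL(\boE)$, while the scalar $\ninf{f \cdot f'}$ plays the role of the likelihood of the hypothesis $(f, f')$ under the fusion event. The outer-measure analogue of Bayes' rule in the hypothesis space then reweights $P \otimes P'$ by this likelihood and renormalises; pushing forward by $(f, f') \mapsto (f \cdot f')^{\scl}$ produces exactly the formula for $P \star P'(F)$. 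Positivity and total mass one are then immediate, and the compatibility condition ensures that the denominator is strictly positive.

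The main obstacle is justifying the Bayesian-like reweighting rigorously within the outer-measure framework, since the fusion event $X = X'$ is typically of probability zero under any dominated product $p \otimes p'$. One route is to verify directly that the outer measure induced by $P \star P'$ dominates the classical Bayesian posterior proportional to $p \cdot p'$ for every pair of compatible reference probability densities $p, p'$ on $\boE$, which pins down the formula up to normalisation. Measurability of the map $(f, f') \mapsto (f \cdot f')^{\scl}$ on $\boL(\boE) \times \boL(\boE)$, needed to make the push-forward well-defined, is a subsidiary technical point that follows from the measurability of pointwise products and the supremum norm, as discussed in \cite{Houssineau2015}.
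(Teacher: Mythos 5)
Your argument follows essentially the same route as the paper's own justification of \cref{thm:fusion}: the paper likewise forms the product prior $\breve{P} = P \times P'$ from strong independence, introduces the likelihood $\ell(\Delta \given f,f') = \ninf{f \cdot f'}$ of the diagonal event ``$P$ and $P'$ represent the same system'', performs the Bayes-type reweighting to obtain $\hat{P}(\cdot \given \Delta)$, and projects onto $\boL(\boE)$ via the kernel $K((f,f'),F) = \delta_{(f\cdot f')^{\scl}}(F)$. The only caveat is that the paper offers this construction as an interpretation rather than a proof (the theorem is imported from the cited reference), so the rigour issue you flag about conditioning on the measure-zero diagonal is likewise left to that reference.
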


The strong independence considered in \cref{thm:fusion} is analogous to the statistical independence assumed in the standard Bayes' theorem. The denominator of \eqref{eq:fusion} is a scalar in the interval $(0,1]$ and quantifies how likely it is that $P$ and $P'$ represent the same system. The rescaling $\cdot^{\scl}$ ensures that $P \star P'$ is a probability measure supported by possibility functions rather than an arbitrary measure supported by arbitrary functions of the form $f \cdot f'$ for some $f,f' \in \boL(\boE)$. Rescaling is not necessary if the outer measure $\overline{P \star P'}$ induced by $P \star P'$ is considered instead, since it can simply be written that
\eqnl{eq:fusionOuterMeas}{
\overline{P \star P'}(\tf) \propto \int \ninf{\tf \cdot f \cdot f'} P(\d f) P'(\d f')
}
for any $\tf \in \boL^{\infty}(\boE)$. Several special cases of the use of the operation $\star$ are given in \cite{Houssineau2016_dataAssimilation}.

\begin{example}
If $X$ and $X'$ are uncertain variables that (at least partially) characterise the same random phenomenon, i.e.\ they have some statistical dependence, then the associated outer measures $P$ and $P'$ cannot be fused together using \cref{thm:fusion}. For instance, if two observers study a biased coin and independently determine that the probability of heads is $3/4$ then it is erroneous to combine these information and conclude that the probability of heads must be $(3/4 \times 3/4) / (1/4 \times 1/4 + 3/4 \times 3/4) = 9/10$. However, if the coin is tossed and two observers witness the experiment but are unsure of the outcome, e.g.\ they are both $75\%$ sure that the result was tails, then it is possible to combine these independent pieces of information and claim that the outcome was tails with a likelihood of $9/10$. This result also holds if one observer has studied the coin and the other has independently witnessed the experiment.
\end{example}

Both $P$ and $P'$ can be seen as priors and the probability measure $P \star P'$ can be seen as a Bayesian posterior given that $P$ and $P'$ represent the same system. This can be highlighted by assuming that the system of interest is fully characterised by its state in $\boE$, so that the event ``$P$ and $P'$ represent the same system'' corresponds to the diagonal $\Delta$ of $\boE \times \boE$. In this case, we can define a joint probability measure $\breve{P} = P \times P'$ and a likelihood $\ell(\Delta \given f,f') = \ninf{f \cdot f'}$ giving the compatibility between $f$ and $f'$, e.g.\ $\ell(\Delta \given \ind{A},\ind{A'}) = 0$ if $A$ and $A'$ are disjoint subsets. With these notations, we can compute the posterior distribution
\eqns{
\hat{P}(\hat{F} \given \Delta) \defeq \dfrac{ \int_{\hat{F}} \ell(\Delta \given \bsf) \breve{P}(\d \bsf)}{\int \ell(\Delta \given \bsf) \breve{P}(\d \bsf)}
}
for any measurable subset $\hat{F}$ of $\boL(\boE) \times \boL(\boE)$. However, since we are only interested in the value of the function $\bsf = (f,f')$ on the diagonal $\Delta$, i.e.\ the values of the function $\hat{f}(x) = f(x)f'(x)$, we introduce a kernel $K((f,f'),F) = \delta_{(f\cdot f')^{\scl}}(F)$ which projects compatible possibilities to a single posterior possibility, and the distribution $P \star P'$ on $\boL(\boE)$ is found to be equal to the projection of $\hat{P}(\cdot \given \Delta)$ in the following way: 
\eqns{
P \star P'(F) = \int K(\bsf,F) \hat{P}(\d \bsf \given \Delta)
}
for any measurable subset $F$ of $\boL(\boE)$. The presence of the kernel $K$ is not usual, but it is just a projection, and the usual ingredients of Bayes theorem such as the prior $\breve{P}$ and the likelihood $\ell(\Delta \given \cdot)$ can be identified.

If $\oP_{0\up T}$ and $\oP'_{0\up T}$ are two joint outer measures induced by $\{P_t(\cdot \given x_{0:t-1})\}_{t\in\bbT}$ and $\{P'_t(\cdot \given x_{0:t-1})\}_{t\in\bbT}$ then the operation $\star$ can be applied component-wise and gives the posterior joint outer measure $\oP^{\star}_{0\up T}$ characterised by
\eqns{
\oP^{\star}_{0\up T}(\bstf) = \oP^{\star}_0 \dots \oP^{\star}_T(\bstf),
}
for any $\bstf \in \boL^{\infty}(\boX_{0:T})$, with $\oP^{\star}_t(\cdot)(x_{0:t-1})$ the outer measure on $\boX_t$ induced by $P_t(\cdot \given x_{0:t-1}) \star P'_t(\cdot \given x_{0:t-1})$ for any $t \in \bbT$. Note that in general, $\oP_{0\up T}$ can also be combined with an outer measure of the form $\oP'_{0:T}$ but not with one of the form $\oP'_{T\down 0}$.

\section{Smoothing}
\label{sec:smoothing}

The objective in this section is to derive an expression of the posterior outer measure on the joint space $\boX_{0:T}$ induced by the combination of all the information available up to time $T$. The Markov property is not sufficient to simplify the predicted outer measure $\oP_{0\up T}$ on $\boX_{0:T}$ which takes the form
\eqns{
\oP_{0\up T}(\bstf) = \oP_0 \oQ_1 \dots \oQ_T(\bstf),
}
for any $\bstf \in \boL^{\infty}(\boX_{0:T})$. The observed information across time can be expressed as another outer measure $\oR_{0:T}$ on $\boY_{0:T}$ characterised by
\eqns{
\oR_{0:T}(\bstf) = \oR_0(\tf_0) \dots \oR_T(\tf_T),
}
for any separable function $\bstf(\bsy) = \tf_0(\bsy_0) \dots \tf_T(\bsy_T)$ in $\boL^{\infty}(\boY_{0:T})$. This can also be expressed through a single probability distribution $R_{0:T}$ on $\boL(\boY_{0:T})$ defined as the product $R_{0:T} = R_0 \times \dots \times R_T$. The smoothed outer measure $\oP_{0\up T|T}$ is the posterior outer measure based on $\oP_{0\up T}$ and $\oR_{0:T}$, that is
\eqns{
\oP_{0\up T|T}(\bstf) = \oP_{0|0} \oQ_{1|1} \dots \oQ_{T|T}(\bstf),
}
for any $\bstf \in \boL^{\infty}(\boX_{0:T})$, where $\oP_{0|0}$ is the outer measure induced by $P_0 \star (O_0^*R_0)$ and where $\oQ_{t|t}(\cdot)(x_{t-1})$ is the conditional outer measure induced by $Q_t(x_{t-1},\cdot) \star (O_t^* R_t)$ for any $x_{t-1} \in \boX_{t-1}$ and for all $t \in \{1,\dots,T\}$.

One way of simplifying the form of $\oP_{0\up T|T}$ is to make the composition of ``$\int\sup (\cdot) P(\d f)$'' collapse by retaining a single term in each integral as in the following theorem. The other natural way is to cancel out the supremums, but this requires $P_0$ to be equivalent to a probability measure on $\boX_0$ and all the Markov kernels $Q_t$ to be equivalent to Markov kernels on $\boX_t$, which leads back to a formulation that resemble the classical Bayesian formulation (except for the observed information).

\begin{theorem}
\label{thm:smoothing}
If for any $t \in \{1,\dots,T\}$ there exists a function $g_t(x,\cdot) \in \boL(\boX_t)$ such that $Q_t(x,\cdot) = \delta_{g_t(x,\cdot)}$ for any $x \in \boX_{t-1}$, then the smoothed outer measure $\oP_{0:T|T}$ is characterised by
\eqns{
\oP_{0:T|T}(\bstf) \propto \int \ninf{ \bstf \cdot u_{0:T|T} }  P_0(\d f_0) R_0(\d h_0) \dots R_T(\d h_T),
}
for any $\bstf \in \boL^{\infty}(\boX_{0:T})$, where $u_{0:T|T} \in \boL^{\infty}(\boX_{0:T})$ depends implicitly on $f_0$ and $h_0,\dots,h_T$ and is characterised by
\eqns{
u_{0:T|T}(\bsx) = f_0(\bsx_0) \prod_{t=1}^T g_t(\bsx_{t-1},\bsx_t) \prod_{t=0}^T h_t(O_t(\bsx_t)).
}
for every $\bsx \in \boX_{0:T}$.
\end{theorem}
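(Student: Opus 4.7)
The plan is to prove the claim by induction on $T$, unrolling the composition $\oP_{0|0}\oQ_{1|1}\dots\oQ_{T|T}$ using the explicit form of each fused conditional outer measure. First, applying \eqref{eq:fusionOuterMeas} to $Q_t(x_{t-1},\cdot)\star(O_t^*R_t)$ and invoking the Dirac hypothesis $Q_t(x_{t-1},\cdot)=\delta_{g_t(x_{t-1},\cdot)}$, which collapses the integration against $Q_t$, one obtains for any $\tf\in\boL^\infty(\boX_t)$ the single-step identity
\[
\oQ_{t|t}(\tf)(x_{t-1}) \;\propto\; \int \sup_{x_t}\!\bigl[\tf(x_t)\,g_t(x_{t-1},x_t)\,h_t(O_t(x_t))\bigr]\,R_t(\d h_t),
\]
with an analogous expression for $\oP_{0|0}$ that additionally integrates against $P_0$. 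This single-step identity is the workhorse of the argument.

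The base case $T=0$ follows directly from \eqref{eq:fusionOuterMeas} applied to $P_0\star O_0^*R_0$, matching the claim with $u_{0|0}(x_0)=f_0(x_0)\,h_0(O_0(x_0))$. For the inductive step I would assume that, for some $t\in\{1,\dots,T\}$, the tail composition satisfies
\[
\oQ_{t|t}\dots\oQ_{T|T}(\bstf)(x_{0:t-1}) \;\propto\; \int \sup_{x_{t:T}}\!\Bigl[\bstf(\bsx)\prod_{s=t}^{T} g_s(x_{s-1},x_s)\,h_s(O_s(x_s))\Bigr]\prod_{s=t}^{T}R_s(\d h_s),
\]
and apply the preceding factor $\oQ_{t-1|t-1}$, or $\oP_{0|0}$ at the outermost step. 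Two moves are then combined: Fubini--Tonelli exchanges the product integrations against $R_{t-1}, R_t, \ldots, R_T$, which are over disjoint coordinates; and the nested suprema $\sup_{x_{t-1}}\sup_{x_{t:T}}$ over disjoint coordinates merge into a single $\sup_{x_{t-1:T}}$, picking up the additional factor $g_{t-1}(x_{t-2},x_{t-1})\,h_{t-1}(O_{t-1}(x_{t-1}))$ and adding one more integration against $R_{t-1}$. This is exactly the claim at level $t-1$, and one more application of the same argument at $t=0$ (replacing the outer $\oQ$ by $\oP_{0|0}$) pulls in the $P_0(\d f_0)$ integration and the factor $f_0(x_0)$, yielding the announced form.

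The main obstacle will be the bookkeeping of the proportionality constants. Each fusion \cref{thm:fusion} introduces an $x_{t-1}$-dependent denominator $Z_t(x_{t-1})=\int\ninf{g_t(x_{t-1},\cdot)\cdot h_t\circ O_t}\,R_t(\d h_t)$ which, a priori, sits inside the supremum over $x_{t-1}$ at the next recursion level and does not obviously factor out. The role of $\propto$ in the statement is precisely to absorb these local, state-dependent normalizations: what one verifies is that the unnormalized integrands agree at every level, with the single global denominator $\int\ninf{u_{0:T|T}}\,P_0(\d f_0)\prod_{t=0}^{T}R_t(\d h_t)$ then fixing the constant so that $\oP_{0:T|T}$ extends, in the sense of \cref{thm:fusion}, a bona fide posterior probability measure on $\boL(\boX_{0:T})$. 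Establishing that the merged expression on the right-hand side is indeed the outer measure corresponding to this joint posterior, rather than an unrelated upper bound arising from an overly coarse exchange of supremum and integral, is the key technical point that deserves the most care.
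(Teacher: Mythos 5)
There is a genuine gap here, and you have in fact put your finger on it yourself without resolving it. Your route unrolls the composition $\oP_{0|0}\oQ_{1|1}\dots\oQ_{T|T}$ in which each factor is \emph{separately} normalised via \cref{thm:fusion}; the factor $\oQ_{t|t}$ carries the denominator $Z_t(x_{t-1})=\int\ninf{g_t(x_{t-1},\cdot)\cdot(h_t\circ O_t)}\,R_t(\d h_t)$, which is a genuine function of $x_{t-1}$. At the next level of the recursion this $1/Z_t(x_{t-1})$ sits \emph{inside} the supremum over $x_{t-1}$ and inside the integral against $R_{t-1}$, so it cannot be absorbed into a global proportionality constant as you claim: carried through to the end, your induction produces an integrand of the form $\ninf{\bstf\cdot u_{0:T|T}\cdot\prod_{t}Z_t(\bsx_{t-1})^{-1}}$ rather than $\ninf{\bstf\cdot u_{0:T|T}}$, and these differ unless every $Z_t$ is constant. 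A second, related problem is the ``merge'' step of your induction: to pull the integrals against $R_t,\dots,R_T$ (which your inductive hypothesis places \emph{outside} the supremum over $x_{t:T}$) past the new supremum over $x_{t-1}$ you need $\sup_{x_{t-1}}\int(\cdot)=\int\sup_{x_{t-1}}(\cdot)$, whereas in general only the inequality $\sup\int\leq\int\sup$ holds; Fubini--Tonelli addresses the order of the integrals but not this interchange.

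The paper's proof avoids both problems by never performing a local fusion. Under the Dirac assumption the \emph{prior} composition collapses exactly, because suprema nested around suprema with no intervening integrals or normalisations merge into a single joint supremum: $\oP_{0\up T}$ is the unconditional outer measure induced by the push-forward $\zeta_*P_0$ on $\boL(\boX_{0:T})$, with $\zeta(f)(\bsx)=f(\bsx_0)\prod_{t=1}^{T}g_t(\bsx_{t-1},\bsx_t)$, and the observations are likewise gathered into the single product measure $R_{0:T}=R_0\times\dots\times R_T$ on $\boL(\boY_{0:T})$. \cref{thm:fusion} is then applied \emph{once}, on the joint space, introducing a single global normalisation constant, and the integrand $\ninf{\bstf\cdot f\cdot(h\circ O)}$ is read off as $\ninf{\bstf\cdot u_{0:T|T}}$. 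If you wish to keep an inductive structure, the induction should be run on the unnormalised prior only (to establish the push-forward form of $P_{0:t}$), postponing all contact with the $R_t$'s and with normalisation to one final joint application of the fusion theorem.
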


The statement of \cref{thm:smoothing} involves the function $u_{0:T|T}$ that is not a possibility function in general as in \eqref{eq:fusionOuterMeas}. This is only for the sake of compactness as the rescaled version of $u_{0:T|T}$ could be used instead if compensating by its supremum norm $\ninf{u_{0:T|T}}$.

\begin{proof}
It follows from the assumption of the theorem that $\oQ_t(\tf)(x') = \ninf{\tf \cdot g_t(x',\cdot)}$ for any $\tf \in \boL(\boX_t)$ and
\eqns{
\oP_{0\up T}(\bstf) = \int \sup_{\bsx \in \boX_{0:T}} \bigg( \bstf(\bsx) f(\bsx_0) \prod_{t=1}^T g_t(\bsx_{t-1},\bsx_t) \bigg) P_0(\d f),
}
for any $\bstf \in \boL(\boX_{0:T})$. The associated distribution $P_{0:T}$ on $\boL(\boX_{0:T})$ is the push-forward measure $\zeta_* P_0$ with $\zeta$ the mapping from $\boL(\boX_0)$ to $\boL(\boX_{0:T})$ characterised by
\eqns{
\zeta(f)(\bsx) = f(\bsx_0) \prod_{t=1}^T g_t(\bsx_{t-1},\bsx_t)
}
for any $f \in \boL(\boX_0)$ and any $\bsx \in \boX_{0:T}$. The mapping $\zeta$ is implicitly assumed to be measurable. Since both $\oP_{0:T}$ and $\oR_{0:T}$ take unconditional forms, the posterior distribution $P_{0:T|T}$ which integrates all the observed information can be stated simply as
\eqnsa{
P_{0:T|T}(F) & = \big(P_{0:T} \star (O^*R_{0:T})\big)(F) \\
& \propto \int \ind{F}\big((f \cdot (h\circ O))^{\scl}\big) \ninf{f \cdot (h\circ O)} P_{0:T}(\d f) R_{0:T}(\d h),
}
for any measurable subset $F$ of $\boL(\boX_{0:T})$, where $O = O_0 \times \dots \times O_T : \boX_{0:T} \to \boY_{0:T}$ and where the pointwise product $f \cdot (h\circ O)$ can be expressed for any $f$ and any $h$ in the support of $P_{0:T}$ and $R_{0:T}$ respectively as
\eqns{
(f \cdot (h\circ O))(\bsx) = f_0(\bsx_0) \prod_{t=1}^T g_t(\bsx_{t-1},\bsx_t) \prod_{t=0}^T h_t(O_t(\bsx_t)),
}
for any $\bsx \in \boX_{0:T}$ and for some $f_0 \in \boL(\boX_0)$ and some $\{h_t\}_{t\in\bbT}$ such that $\zeta(f_0) = f$ and $h(\bsy) = h_0(\bsy_0) \dots h_T(\bsy_T)$. The form taken by the smoothed outer measure $\oP_{0:T|T}$ can then be easily deduced.
\end{proof}

\begin{example}
\label{ex:singleFunctionSmoothing}
If the prior knowledge $P_0$ and the observed information $\{R_t\}_{t\in\bbT}$ are further simplified to $P_0 = \delta_{f_0}$ and $R_t = \delta_{h_t}$ for all $t \in \bbT$, then $\oP_{0:T|T}(\bstf) = \ninf{\bstf \cdot f_{0:T|T}}$ for any $\bstf \in \boL^{\infty}(\boX_{0:T})$, with
\eqns{
f_{0:T|T}(\bsx) \propto f_0(\bsx_0) \prod_{t=1}^T g_t(\bsx_{t-1},\bsx_t) \prod_{t=0}^T h_t(O_t(\bsx_t))
}
which has the exact same form as the usual smoothing distribution \cite{Briers2010}
\eqns{
p_{0:T|T}(\bsx) \propto p_0(\bsx_0) \prod_{t=1}^T q_t(\bsx_{t-1}, \bsx_t) \prod_{t=0}^T \ell_t(y_t \given \bsx_t),
}
with $\ell_t(y_t \given x) = h_t(O_t(x))$ the likelihood of a standard observation $y_t$ at time $t \in \bbT$, and where probability density functions are written with the same notation as their corresponding measure. The only difference between these two expressions is that possibility functions replace probability distributions. The expression $p_{0:T|T}$ can also be recovered from $\oP_{0:T|T}$ by assuming that $P_t$ and $Q_t$ are equivalent to the distribution $p_t$ and Markov kernel $q_t$ at each time $t$. This does not however limit the modelling options of the observed information. 
\end{example}

\section{Filtering}
\label{sec:filtering}

The objective is now to compute the information at successive times in a recursive fashion. The predicted and updated filtering outer measures $\oP_{t|t-1}$ and $\oP_{t|t}$ at time $t \in \bbT$ could be simply expressed as the marginals of predicted smoothing outer measure $\oP_{0:t|t-1}$ and the updated smoothing outer measure $\oP_{0:t|t}$, that is as
\eqns{
\oP_{t|t-1}(\tf) = \oP_{0:t|t-1}(\tf) \AND \oP_{t|t}(\tf) = \oP_{0:t|t}(\tf)
}
with $\tf \in \boL^{\infty}(\boX_t)$. However, as in the standard approach, this gives little insight into how to actually compute these terms. Instead, the predicted outer measure $\oP_{t|t-1}$ at time $t$ has to be expressed as a function of the updated outer measure $\oP_{t-1|t-1}$ at the previous time and, similarly, the updated outer measure $\oP_{t|t}$ at time $t$ has to be expressed as a function of the predicted one.

We assume that at a given time $t-1$, $\oP_{t-1|t-1}$ is in the single-variate form
\eqns{
\oP_{t-1|t-1}(\tf) = \int \ninf{\tf \cdot f} P_{t-1|t-1}(\d f),
}
with $\tf \in \boL^{\infty}(\boX_{t-1})$. The predicted outer measure $\oP_{t|t-1}$ is the marginal on $\boX_t$ of the outer measure $\oP_{t-1|t-1} \oQ_t$ on the joint space $\boX_{t-1}\times\boX_t$, which can be expressed as
\eqnmla{eq:generalPrediction}{
\oP_{t|t-1}(\tf) & = \oP_{t-1|t-1} \oQ_t(\tf) \\
& = \int \ninf{ f \cdot \oQ_t(\tf) } P_{t-1|t-1}(\d f),
}
for any $\tf \in \boL^{\infty}(\boX_t)$. As with smoothing, this expression does not reduce to a single-variate outer measure in general so that special cases are considered in the following sections. We proceed as in \cref{sec:smoothing} to obtain a closed-form expression of the filtering equations.

\begin{theorem}
If for any $t \in \{1,\dots,T\}$ there exists a function $g_t(x,\cdot) \in \boL(\boX_t)$ such that $Q_t(x,\cdot) = \delta_{g_t(x,\cdot)}$ for any $x \in \boX_{t-1}$ then the predicted and updated distributions $P_{t|t-1}$ and $P_{t|t}$ are characterised by
\eqnl{eq:filteringUncertainTransition}{
\begin{cases}
P_{t|t-1} = (\xi_t)_* P_{t-1|t-1} \\
P_{t|t} = P_{t|t-1} \star (O_t^* R_t),
\end{cases}
}
where the mapping $\xi_t$ from $\boL(\boX_{t-1})$ to $\boL(\boX_t)$ is characterised by
\eqnsa{
\xi_t(f) : \boX_t & \to [0,1] \\
x & \mapsto \sup_{x' \in \boX_{t-1}} f(x') g_t(x',x).
}
for any $f \in \boL(\boX_{t-1})$.
\end{theorem}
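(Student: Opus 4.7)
The plan is to verify the two equations of \eqref{eq:filteringUncertainTransition} separately, treating the prediction step by direct computation from \eqref{eq:generalPrediction} and the update step as an immediate invocation of \cref{thm:fusion}.

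For the prediction step, I would start from the general prediction expression \eqref{eq:generalPrediction} and substitute the assumption $Q_t(x,\cdot) = \delta_{g_t(x,\cdot)}$ to obtain $\oQ_t(\tf)(x') = \sup_{x \in \boX_t} \tf(x) g_t(x',x)$ for any $\tf \in \boL^{\infty}(\boX_t)$ and any $x' \in \boX_{t-1}$. Substituting back gives
\eqns{
\oP_{t|t-1}(\tf) = \int \sup_{x'\in\boX_{t-1}} \Big( f(x') \sup_{x\in\boX_t} \tf(x) g_t(x',x)\Big) P_{t-1|t-1}(\d f).
}
Since the two suprema commute, the integrand equals $\sup_{x \in \boX_t} \tf(x)\,\xi_t(f)(x) = \ninf{\tf\cdot\xi_t(f)}$, so that
\eqns{
\oP_{t|t-1}(\tf) = \int \ninf{\tf\cdot\xi_t(f)} P_{t-1|t-1}(\d f) = \int \ninf{\tf \cdot f'}\,((\xi_t)_* P_{t-1|t-1})(\d f'),
}
the last identity following from the change-of-variables formula for push-forwards provided $\xi_t$ is measurable from $\boL(\boX_{t-1})$ to $\boL(\boX_t)$. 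This identifies $P_{t|t-1}$ with $(\xi_t)_* P_{t-1|t-1}$, as desired.

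For the update step, I would observe that at time $t$ the predicted outer measure $\oP_{t|t-1}$ encodes all information assimilated up to time $t-1$ about the uncertain state $X_t$, while the new observed information at time $t$ is carried by $R_t$ on $\boL(\boY_t)$ and, by the pullback construction of \cref{sec:operations}, yields the outer measure on $\boX_t$ induced by $O_t^* R_t$. By the modelling assumption that the observation information at time $t$ is weakly independent of the state dynamics and prior information, and by the construction of the model as the combination of strongly independent uncertain variables $\{V_t\}_{t\in\bbT}$ together with a non-random observation map, the strong independence requirement of \cref{thm:fusion} is met. Applying that theorem directly with $P = P_{t|t-1}$ and $P' = O_t^* R_t$ gives $P_{t|t} = P_{t|t-1} \star (O_t^* R_t)$, completing the proof.

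The main subtlety, which I would need to check but not belabour, is the measurability of the map $\xi_t$ between the function spaces $\boL(\boX_{t-1})$ and $\boL(\boX_t)$ so that the push-forward is well-defined; this is implicit in the same sense as noted earlier for the mapping $\zeta$ in the proof of \cref{thm:smoothing}. The commutation of the two suprema is immediate, and no delicate measure-theoretic argument is required beyond what is already tacitly in force for the outer-measure framework.
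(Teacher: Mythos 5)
Your proposal is correct and follows essentially the same route as the paper's own proof: the prediction step simplifies \eqref{eq:generalPrediction} under the single-possibility assumption on $Q_t$ to identify $P_{t|t-1}$ with the push-forward $(\xi_t)_* P_{t-1|t-1}$, and the update step then applies the fusion operation $\star$ of \cref{thm:fusion} directly to the resulting single-variate distributions. You merely make explicit the commutation of the two suprema and the measurability caveat on $\xi_t$, both of which the paper leaves implicit.
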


\begin{proof}
With the considered assumption, \eqref{eq:generalPrediction} simplifies to
\eqnmla{eq:predConstraint}{
\oP_{t|t-1}(\tf) & = \int \ninf{ \tf \cdot \xi_t(f') } P_{t-1|t-1}(\d f') \\
& = \int \ninf{\tf \cdot f} (\xi_t)_*P_{t-1|t-1}(\d f).
}
The outer measure $\oP_{t|t-1}$ is now single-variate and the corresponding distribution on $\boL(\boX_t)$ is $P_{t|t-1} \defeq (\xi_t)_* P_{t-1|t-1}$. The next step is to incorporate the observed information $R_t$ in the predicted distribution $P_{t|t-1}$. Since the operation $\star$ defined in \eqref{eq:fusion} can be directly applied to these single-variate distributions, we find that $P_{t|t} = P_{t|t-1} \star (O_t^* R_t)$. To sum up, the filtering equations can be expressed in terms of probability distributions on $\boL(\boX_{t-1})$ and $\boL(\boX_t)$ since all the outer measures involved are single-variate under the considered assumptions.
\end{proof}

\begin{example}
To understand the mapping $\xi_t$, assume that $g_t(x',\cdot) = \ind{G_{x'}}$ for some subset $G_{x'}$ of $\boX_t$, i.e.\ if the considered system is in state $x'$ at time $t-1$ then it is only known that its state at time $t$ is within the subset $G_{x'}$. It follows that
\eqnsa{
\xi_t(\ind{A'})(x) & = \sup_{x' \in \boX_{t-1}} \ind{A'}(x') \ind{G_{x'}}(x) \\
& =
\begin{cases*}
1 & if there exists $x' \in \boX_{t-1}$ s.t.\ $x \in G_{x'}$ and $x' \in A'$ \\
0 & otherwise.
\end{cases*}
}
This can be written as $\xi_t(\ind{A'}) = \ind{A}$ with
\eqns{
A = \bigcup_{x' \in A'} G_{x'}.
}
If $G_{x'}$ is translation invariant, i.e.\ the extent of the set $G_{x'}$ does not depend on $x'$, then, in the language of mathematical morphology, $A$ is a \emph{dilatation} of $A'$ by $G_{x'}$. If $P_{t-1|t-1} = \delta_{\ind{A'}}$ then
\eqns{
P_{t-1|t-1}(\xi_t^{-1}(F)) =
\begin{cases*}
1 & if $\ind{A'} \in \xi_t^{-1}(F)$ \\
0 & otherwise,
\end{cases*}
}
where the condition $\ind{A'} \in \xi_t^{-1}(F)$ is equivalent to $\xi_t(\ind{A'}) = \ind{A} \in F$ so that $(\xi_t)_* P_{t-1|t-1} = \delta_{\ind{A}}$ as expected.
\end{example}

\begin{remark}
If the initial information $P_0$ is equivalent to a probability measure $p_0$ then a particle representation $\{x_i\}_{i=1}^N$ of $p_0$ can be used to approximate $P_0$ as $P_0 \approx N^{-1} \sum_{i=1}^N \delta_{\ind{x_i}}$. The recursion \eqref{eq:filteringUncertainTransition} can then be more easily applied.
\end{remark}

\begin{example}
The filtering equations \eqref{eq:filteringUncertainTransition} imply that if the information provided at time $t$ via $R_t$ takes the form of a probability measure $r_t$ on the state space, then $P_{t|t}$ will also be equivalent to some probability measure $p_{t|t}$ on $\boX_t$. The predicted information $P_{t+1|t}$ will however tend to take a slightly more complicated form: it will give probability mass $p_{t|t}(\d x')$ to the function $g_{t+1}(x',\cdot)$ on $\boX_{t+1}$. If at time $t+1$, the observation $R_{t+1}$ is once again equivalent to a probability measure $r_{t+1}$ on $\boX_{t+1}$, then the distribution $P_{t+1|t+1} = P_{t+1|t} \star R_{t+1}$ will be of the form
\eqns{
P_{t+1|t+1}(F) \propto \int \ind{F}(\ind{x}) \ninf{ \ind{x} \cdot g_{t+1}(x',\cdot) } r_{t+1}(\d x) p_{t|t}(\d x'),
}
that is $P_{t+1|t+1}$ will be equivalent to a probability measure on $\boX_{t+1}$. If, additionally, $p_{t|t}$ and $r_t$ are Gaussian distributions and $g_{t+1}(x',x)$ is the Gaussian possibility function $\bar\calN(x; \bsF_{t+1} x',\bsQ_{t+1})$ for some matrices $\bsF_{t+1}$ and $\bsQ_{t+1}$ then $P_{t+1|t+1}$ is equivalent to the corresponding posterior Gaussian distribution of the Kalman filter.
\end{example}

\section{Special cases and related results}

We first detail two special cases of the approach introduced in \cref{sec:filtering} where the filtering recursion is expressed without measure-theoretic notations by reducing the functional integrals to finite sums. The second case restricts the system to be linear and based on Gaussian possibility functions. 

\subsection{Filtering with finite sum of possibility functions}

Assume that the predicted distribution $P_{t-1|t-1}$ and the observed information $R_t$ take the form of a finite sum of functions as follows:
\eqns{
P_{t-1|t-1} = \sum_{i \in I_{t-1}} w^i_{t-1} \delta_{f_{t-1|t-1}^i} \AND R_t = \sum_{l \in L_t} v^l_t \delta_{h_t^l}
}
for some indexed families $\{ (w^i_{t-1},f_{t-1|t-1}^i) \}_{i \in I_{t-1}}$ and $\{(v^l_t, h_t^l)\}_{l \in L_t}$ of pairs of weights and functions, then the predicted and updated distributions $P_{t|t-1}$ and $P_{t|t}$ can be expressed as
\eqns{
P_{t|t-1} = \sum_{i \in I_{t-1}} w^i_{t|t-1} \delta_{f_{t|t-1}^i} \AND P_{t|t} = \sum_{i \in I_t} w^i_t \delta_{f^i_{t|t}}
}
where $I_t = I_{t-1} \times L_t$ and where
\eqns{
\begin{cases}
\displaystyle \big( w^i_{t|t-1}, f^i_{t|t-1} \big) = \bigg( w^i_{t-1},\quad \sup_{x' \in \boX_{t-1}} f^i_{t-1|t-1}(x') g_t(x',\cdot) \bigg) \\
\hspace{7cm}\text{for any $i \in I_{t-1}$} \\
 \\
\displaystyle \big( w^i_t, f^i_{t|t} \big) = \bigg( \dfrac{ \ninf{f^{j,l}_{t,t-1}} w^j_{t-1} v^l_t }{ \sum_{(k,m) \in I_t} \ninf{ f^{j,l}_{t,t-1} } w^k_{t-1} v^m_t },\quad (f^{j,l}_{t,t-1})^{\scl}\bigg) \\
\hspace{7cm}\text{for any $i = (j,l) \in I_t$},
\end{cases}
}
with $f^{j,l}_{t,t-1} = f^j_{t|t-1} \cdot (h^l_t \circ O_t)$ for any $(j,l) \in I_t$.

This recursion could be easily computed if the considered possibility functions are part of a parametric family of functions such as indicator functions or Gaussian possibility functions.

In the simplest case where $P_{t-1|t-1}=\delta_{f_{t-1|t-1}}$ and $R_t = \delta_{h_t}$, the filtering equations can be expressed in standard notations as
\eqnl{eq:filteringPossibility}{
\begin{cases}
\displaystyle f_{t|t-1}(x) = \sup_{x' \in \boX_{t-1}} f_{t-1|t-1}(x') g_t(x',x) \\
\vspace{-0.75em}\\
\displaystyle f_{t|t}(x) = \dfrac{ f_{t|t-1}(x) h_t(O_t(x)) }{ \sup_{x' \in \boX_t} f_{t|t-1}(x') h_t(O_t(x'))}.
\end{cases}
}
As in \cref{ex:singleFunctionSmoothing}, these filtering equations are similar to the ones of the standard formulation but with integrals replaced by supremums and distributions replaced with possibility functions. It is interesting to study \eqref{eq:filteringPossibility} under Kalman-like assumptions of Gaussianity and linearity as in the following section.

\subsection{Filtering for linear system with Gaussian possibility function}

A natural question that arises from the simple form of the filtering equations \eqref{eq:filteringPossibility} is: how would such a recursion perform under assumptions of linearity and when only Gaussian possibility functions are involved? Since the information that is given to the algorithm is weaker when compared to the one given to the standard Kalman filter \cite{Anderson1979}, one might expect that the algorithm based on possibilities will be more robust to modelling discrepancies. However, it might also be expected to be less accurate than the standard Kalman filter when dynamics and observation are indeed generated according to the assumed Gaussian distributions. The following theorem shows that, interestingly, both algorithms are equivalent when characterised by their respective means and variance/spread.

\begin{theorem}
\label{thm:Kalman}
Assume that the transition function $F_t( \cdot , V_t)$ and the observation function $O_t$ are linear. Also, assume that the noise $V_t$ is additive and described by a Gaussian possibility function. If the prior possibility function $f_{t-1|t-1}$ and the observed-information $h_t$ are Gaussian, then the mean and spread of the possibility functions in \eqref{eq:filteringPossibility} follow the standard Kalman filter recursion.
\end{theorem}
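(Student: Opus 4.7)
The plan is to proceed inductively on $t$, showing that if $f_{t-1|t-1} = \bar{\calN}(\,\cdot\,;\, m_{t-1|t-1}, \bsP_{t-1|t-1})$, then after applying the two stages of \eqref{eq:filteringPossibility} the result remains Gaussian, with mean and spread given exactly by the Kalman prediction and update formulas. Writing $F_t(\,\cdot\,,v_t) = \bsF_t\,\cdot\, + v_t$ with $V_t$ described by $\bar{\calN}(\,\cdot\,;\,0,\bsQ_t)$, the transition possibility becomes $g_t(x',x) = \bar{\calN}(x;\,\bsF_t x',\bsQ_t)$, and the observation-information is $h_t = \bar{\calN}(\,\cdot\,;\, y_t,\bsR_t)$ for some observed $y_t$ with $O_t = \bsO_t$.

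For the prediction step, I would compute
\eqns{
f_{t|t-1}(x) = \sup_{x'} \exp\!\Big(-\tfrac{1}{2}(x'-m_{t-1|t-1})^T\bsP_{t-1|t-1}^{-1}(x'-m_{t-1|t-1}) - \tfrac{1}{2}(x-\bsF_t x')^T \bsQ_t^{-1}(x-\bsF_t x')\Big)
}
by completing the square in $x'$. The exponent is jointly quadratic in $(x',x)$, so the unique maximizer $x'^*(x)$ is linear in $x$, and substituting it back reduces the exponent to a quadratic form in $x$ alone. This is the tropical/max-plus analogue of Gaussian convolution: because the quadratic form in $x'$ is minimized rather than marginalized out, one avoids the usual normalizing constant, but the residual quadratic in $x$ is identical to what the standard convolution would yield. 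The outcome is $f_{t|t-1} = \bar{\calN}(\,\cdot\,;\, m_{t|t-1},\bsP_{t|t-1})$ with $m_{t|t-1} = \bsF_t m_{t-1|t-1}$ and $\bsP_{t|t-1} = \bsF_t \bsP_{t-1|t-1} \bsF_t^T + \bsQ_t$, which matches the Kalman prediction.

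For the update step, the numerator in \eqref{eq:filteringPossibility} is $\bar{\calN}(x;\, m_{t|t-1},\bsP_{t|t-1})\cdot \bar{\calN}(\bsO_t x;\, y_t,\bsR_t)$. Completing the square in $x$ expresses this as $\exp(-\tfrac{1}{2}(x-m_{t|t})^T \bsP_{t|t}^{-1}(x-m_{t|t})) \cdot \exp(-c/2)$ for some constant $c$ independent of $x$, where
\eqns{
\bsP_{t|t}^{-1} = \bsP_{t|t-1}^{-1} + \bsO_t^T \bsR_t^{-1}\bsO_t,
\qquad
m_{t|t} = \bsP_{t|t}\big(\bsP_{t|t-1}^{-1}m_{t|t-1} + \bsO_t^T \bsR_t^{-1} y_t\big).
}
The denominator in \eqref{eq:filteringPossibility} equals precisely $\exp(-c/2)$ (the supremum of the numerator, attained at $m_{t|t}$), so division yields $f_{t|t} = \bar{\calN}(\,\cdot\,;\, m_{t|t},\bsP_{t|t})$. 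A final application of the Woodbury identity / matrix inversion lemma rewrites these in the standard innovation form with Kalman gain $\bsK_t = \bsP_{t|t-1}\bsO_t^T(\bsO_t\bsP_{t|t-1}\bsO_t^T + \bsR_t)^{-1}$, giving $m_{t|t} = m_{t|t-1} + \bsK_t(y_t - \bsO_t m_{t|t-1})$ and $\bsP_{t|t} = (\bsI - \bsK_t\bsO_t)\bsP_{t|t-1}$.

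The main obstacle is conceptual rather than algebraic: one must verify that the supremum operation in the prediction step produces the \emph{same} covariance update $\bsF_t \bsP_{t-1|t-1}\bsF_t^T + \bsQ_t$ as the integral used in probabilistic Kalman filtering. This is a small miracle of Gaussians, flowing from the fact that for a jointly Gaussian quadratic form, minimizing out one block of variables and marginalizing out that block produce Schur-complement expressions that differ only by a multiplicative constant; since possibility functions ignore such constants (being normalized to sup $1$), the two procedures yield identical mean–spread recursions. Once this identification is in hand, the induction closes and the equivalence with the Kalman recursion follows.
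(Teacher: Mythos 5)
Your argument is correct and matches the paper's proof in substance: both rest on the observation that, after the joint Gaussian quadratic form is split into a marginal factor in $x$ and a conditional factor in $x'$, the conditional factor has supremum one, so maximising over $x'$ produces exactly the mean--spread recursion that integrating it out would produce. The paper merely packages your explicit square-completion and Woodbury steps into a single product-of-Gaussians identity, $\bar\calN(x;\bsF x',\bsQ)\,\bar\calN(x';m,\bsP)=\bar\calN(x;\bsF m,\bsQ+\bsF\bsP\bsF^T)\,\bar\calN\big(x';m+\bsK(x-\bsF m),(\bsI-\bsK\bsF)\bsP\big)$ with $\bsK=\bsP\bsF^T(\bsF\bsP\bsF^T+\bsQ)^{-1}$, applied once for the prediction and once more for the update.
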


\begin{proof}
The assumptions on the transition, observation and prior possibility function can be expressed as
\eqnsa{
f_{t-1|t-1}(x') & = \bar\calN(x'; m_{t-1},\bsP_{t-1}) \\
g_t(x',x) & = \bar\calN(x; \bsF_t x', \bsQ_t) \\
h_t(y) & = \bar\calN(y_t; y, \bsR_t) \\
O_t(x) & = \bsO_t x,
}
for some $y_t \in \boY_t$ representing the observation in the usual way, some $m_{t-1} \in \boX_{t-1}$ and some matrices $\bsP_{t-1}$, $\bsF_t$, $\bsQ_t$, $\bsR_t$ and $\bsO_t$ of appropriate size. Using an equivalent formulation to the standard Kalman filter identity (easily obtained by Sylvester's determinant theorem), expressed as
\eqns{
\bar\calN(x; \bsF x', \bsQ) \bar\calN(x';m, \bsP) = \bar\calN\big(x; \bsF m, \bsQ+\bsF\bsP\bsF^T \big)
\bar\calN\big(x'; m + \bsK(x - \bsF m), (\bsI-\bsK\bsF)\bsP\big)
}
with $\bsK = \bsP\bsF^T(\bsF\bsP\bsF^T + \bsQ)^{-1}$, it follows that
\eqns{
f_{t|t-1}(x) = \bar\calN\big(x; \bsF_t m_{t-1}, \bsQ_t+\bsF_t\bsP_{t-1}\bsF_t^T \big)
\sup_{x' \in \boX_{t-1}} \bar\calN\big(x'; m', \bsP' \big),
}
for some state $m'$ and some matrix $\bsP'$. The supremum in the r.h.s.\ of this expression is equal to $1$ so that the Kalman filter time-prediction is recovered:
\eqnsa{
f_{t|t-1}(x) & = \bar\calN(x; m_{t|t-1}, \bsP_{t|t-1}) \\
& \defeq \bar\calN\big(x; \bsF_t m_{t-1}, \bsQ_t+\bsF_t\bsP_{t-1}\bsF_t^T \big).
}
The Kalman-filter observation update can be recovered in a similar fashion and is found to be
\eqns{
f_{t|t}(x) = \bar\calN\big(x; m_{t|t-1} + \bsK_t(y_t - \bsO_t m_{t|t-1}), (\bsI-\bsK_t \bsO_t)\bsP_{t|t-1}\big)
}
with $\bsK_t = \bsP_{t|t-1}\bsO_t^T(\bsO_t\bsP_{t|t-1}\bsO_t^T + \bsR)^{-1}$. The Kalman filter recursions are then recovered in spite of the presence of supremums instead of integrals in \eqref{eq:filteringPossibility}.
\end{proof}

A related result, named the Kalman evidential filter \cite{Mahler2007}, has been proved in the context of fuzzy Dempster-Shafer theory with a fully-known Markov transition. This assumption however does not allow for recovering the Kalman filter exactly, but yields a recursive algorithm that bears some similarities.

\subsection{Backward smoothing recursion}

Obtaining the expression of the distribution $P_{t|T}$ on $\boL(\boX_t)$ representing the uncertainty at time $t$ given all the observed information up to time $T$ is useful for recovering the smoothed distribution after one filtering pass on the set $\bbT$ of all time steps.

\begin{theorem}
\label{thm:backwardSampling}
If for any $t \in \{1,\dots,T\}$ there exists $g_t(x,\cdot) \in \boL(\boX_t)$ such that $Q_t(x,\cdot) = \delta_{g_t(x,\cdot)}$ for any $x \in \boX_{t-1}$, then the smoothed outer measure $\oP_{0:T|T}$ can be expressed as
\eqnl{eq:backwardSampling}{
\oP_{T\down 0|T}(\bstf) = \oP_{T|T} \oP'_{T-1|T-1} \dots \oP'_{0|0}(\bstf)
}
for any $\bstf \in \boL^{\infty}(\boX_{0:T})$, where the conditional outer measure $\oP'_{t|t}(\tf)(x_{t+1})$ is defined for any $\tf \in \boL^{\infty}(\boX_t)$ and any $x_{t+1} \in \boX_{t+1}$ as
\eqnl{eq:backwardSampling_t}{
\oP'_{t|t}(\tf)(x_{t+1}) = \dfrac{\oP_{t|t}(\tf \cdot g_{t+1}(\cdot,x_{t+1}))}{ \oP_{t|t}(g_{t+1}(\cdot,x_{t+1})) }.
}
\end{theorem}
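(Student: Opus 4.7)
The plan is to prove \eqref{eq:backwardSampling} by induction on $T$, using the explicit integral representation of the smoothed outer measure provided by \cref{thm:smoothing} and the explicit form of each filter $\oP_{t|t}$ obtained by iterating the filtering equations \eqref{eq:filteringUncertainTransition}. For the base case $T=0$, the right-hand side of \eqref{eq:backwardSampling} reduces to $\oP_{0|0}(\bstf)$, which coincides with $\oP_{0:0|0}(\bstf)$ by construction.

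For the inductive step, I first observe that, under the hypothesis $Q_t(x,\cdot)=\delta_{g_t(x,\cdot)}$, repeated application of \eqref{eq:filteringUncertainTransition} yields
\[
\oP_{t|t}(\tf) \propto \int \sup_{\bsx_{0:t}} \tf(\bsx_t) f_0(\bsx_0) \prod_{s=1}^{t} g_s(\bsx_{s-1},\bsx_s) \prod_{s=0}^{t} h_s(O_s(\bsx_s))\, P_0(\d f_0)R_0(\d h_0)\dots R_t(\d h_t),
\]
which is the filtering analogue of the smoothing expression in \cref{thm:smoothing}. Substituting this form into \eqref{eq:backwardSampling_t} expresses $\oP'_{t|t}(\tilde\tf)(x_{t+1})$ as a ratio of two integrals that differ only by the presence of $\tilde\tf(\bsx_t)$ in the supremum of the numerator; this is the outer-measure analogue of Bayes' rule for the backward conditional on $x_{t+1}$, and it allows $\oP'_{t|t}(\tilde\tf)(x_{t+1})$ to be interpreted as the posterior on $\bsx_t$ under the filter $\oP_{t|t}$ given the likelihood $g_{t+1}(\cdot,x_{t+1})$.

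The induction then proceeds by unwinding the nested recursion from the innermost $\oP'_{0|0}$ outward. At each stage, the ratio structure of \eqref{eq:backwardSampling_t} introduces the factor $g_{t+1}(\bsx_t,\bsx_{t+1})$ in the numerator and divides by the ``pointwise predicted outer measure'' $\oP_{t|t}(g_{t+1}(\cdot,\bsx_{t+1}))$ in the denominator. The crucial step is a Chapman--Kolmogorov-type telescoping: the denominator combines with the filter construction of $\oP_{t+1|t+1}$ (itself a ratio built from $\oP_{t|t}$ and $R_{t+1}$) to reconstruct the full joint possibility $u_{0:T|T}$ of \cref{thm:smoothing}, so that after applying the outermost $\oP_{T|T}$ and comparing with the explicit form \cref{thm:smoothing} provides for $\oP_{0:T|T}(\bstf)$, one recovers equality.

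The main obstacle is the bookkeeping of the nested integrations: each successive application of $\oP'_{t|t}$ introduces an independent integration over dummy copies of $P_0, R_0,\dots,R_t$, whereas the target expression of \cref{thm:smoothing} features a single joint integration over $P_0, R_0,\dots,R_T$. Verifying that the ratios in \eqref{eq:backwardSampling_t} cause the spurious dummy integrations to collapse correctly -- through cancellations between the matching numerators and denominators in the telescoping chain, and through the exchangeability of sequential suprema over independent variables $\bsx_0,\dots,\bsx_T$ -- is the technical heart of the argument and must be treated with care analogous to that of the proof of \cref{thm:condOuterMeasure}.
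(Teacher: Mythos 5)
Your proposal correctly assembles the right ingredients (the explicit form of $\oP_{0:T|T}$ from \cref{thm:smoothing}, the filter $\oP_{t|t}$ as the marginal of the smoother, and the reading of \eqref{eq:backwardSampling_t} as a backward Bayes-type rule), but the step you yourself flag as ``the technical heart'' --- the collapse of the nested integrations produced by composing $\oP_{T|T}\oP'_{T-1|T-1}\cdots\oP'_{0|0}$ --- is exactly the part that is missing, and the route you propose for it does not go through in general. Each operator $\oP'_{t|t}(\cdot)(x_{t+1})$ in \eqref{eq:backwardSampling_t} is a \emph{ratio of integrals} over an independent copy of $(f_0,h_0,\dots,h_t)$, so unwinding the composition from the inside out produces products of such ratios over many independent dummy copies, whereas the target expression of \cref{thm:smoothing} is a \emph{single} integral over one copy of $(f_0,h_0,\dots,h_T)$ of a quantity that factorises pointwise. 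Since $\int (a/b)\,\d\mu \neq (\int a\,\d\mu)/(\int b\,\d\mu)$ in general, there is no generic cancellation that makes the spurious copies disappear; asserting that the dummy integrations ``collapse correctly'' is precisely the claim that needs proof, and it is not a bookkeeping matter.

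The paper's proof inverts the order of operations and thereby avoids this problem: it first fixes a draw $(f_0,h_0,\dots,h_T)$ in the support of $P_0\times R_0\times\dots\times R_T$ and establishes the \emph{pointwise} algebraic identity
\begin{equation*}
f_{0:T|T}(\bsx) = f_{T|T}(\bsx_T)\prod_{t=0}^{T-1}\frac{g_{t+1}(\bsx_t,\bsx_{t+1})\,f_{t|t}(\bsx_t)}{\sup_{x}g_{t+1}(x,\bsx_{t+1})\,f_{t|t}(x)},
\end{equation*}
which holds exactly because the unnormalised filters $\tilde f_{t+1|t+1}(x_{t+1}) = h_{t+1}(O_{t+1}(x_{t+1}))\sup_x g_{t+1}(x,x_{t+1})\tilde f_{t|t}(x)$ telescope against the denominators --- the possibility-function analogue of the classical backward decomposition of the smoothing density, with supremums in place of integrals. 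Only after this factorisation is obtained for each supported possibility function is the backward conditional recognised as the combination of $Q_{t+1}$ and $P_{t|t}$, i.e.\ as \eqref{eq:backwardSampling_t}, and the outer measure reassembled by integrating over the single copy of the draw. If you want to keep an induction on $T$, it should be an induction on this pointwise identity, not on the composed operators; as written, your plan closes only in the single-possibility case $P_0=\delta_{f_0}$, $R_t=\delta_{h_t}$, where every $\oP_{t|t}$ and $\oP'_{t|t}$ is a single-possibility outer measure and the operator composition reduces to the pointwise identity above.
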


\begin{proof}
The probability distribution $P_{0:T|T}$ on $\boL(\boX_{0:T})$ defined in \cref{thm:smoothing} is supported by possibility functions of the form
\eqnsa{
f_{0:T|T}(\bsx) & = \dfrac{f_0(\bsx_0) \prod_{t=1}^T g_t(\bsx_{t-1},\bsx_t) \prod_{t=0}^T h_t(O_t(\bsx_t))}{ \sup_{\bsx'} f_0(\bsx'_0) \prod_{t=1}^T g_t(\bsx'_{t-1},\bsx_t) \prod_{t=0}^T h_t(O_t(\bsx'_t)) } \\
& = f_{T|T}(\bsx_T) \prod_{t = 0}^{T-1} f'_{t|t}( \bsx_t \given \bsx_{t+1} )
}
with
\eqns{
f'_{t|t}( x_t \given x_{t+1}) = \dfrac{ g_{t+1}(x_t,x_{t+1}) f_{t|t}(x_t) }{ \sup_x g_t(x,x_{t+1}) f_{t|t}(x) }.
}
The outer measure $\oP_{T\down 0|T}$ can then expressed as in \eqref{eq:backwardSampling} where, for any $t \in \{0,\dots,T-1\}$, $\oP'_{t|t}$ is induced by a distribution $P'_{t|t}(\cdot \given x_{t+1})$ supported by possibility functions of the form $f'_{t|t}(\cdot \given x_{t+1})$ for any $x_{t+1} \in \boX_{t+1}$. However, the possibility function $f'_{t|t}$ can be recognised as the one yielded by the combination of $Q_{t+1}$ and $P_{t|t}$, which implies that $\oP'_{t|t}$ can be equally expressed as \eqref{eq:backwardSampling_t}, hence proving the theorem.
\end{proof}

\subsection{Filtering with known transition}

The general recursion \eqref{eq:generalPrediction} can also be made closed-form by following an approach that is the exact opposite of the one considered in \cref{sec:filtering}, that is by making the Markov kernel extremely informative: it is assumed that for any $x \in \boX_{t-1}$ the transition $Q_t(x,\cdot)$ is equivalent to a Markov kernel $q_t(x,\cdot)$ from $\boX_{t-1}$ to $\boX_t$, that is $Q_t(x,\cdot)$ gives mass $q_t(x,\d x')$ to the degenerate possibility function $\ind{x'}$. From this assumption, the outer measure $\oQ_t$ verifies
\eqns{
\oQ(\ind{B})(x) = \int \ind{B}(x') q_t(x, \d x') = q_t(x,B),
}
for any $B \in \calB(\boX_t)$. However, this assumption is not sufficient for \eqref{eq:generalPrediction} to simplify unless $P_{t-1|t-1}$ is also equivalent to a probability measure $p_{t-1|t-1}$ on $\boX_{t-1}$, in which case \eqref{eq:generalPrediction} becomes the standard time prediction
\eqnl{eq:stdPrediction}{
p_{t|t-1}(B) = \int q_t(x,B) p_{t-1|t-1}(\d x),
}
for any $B \in \calB(\boX_t)$. The update requires $R_t$ to be restricted to non-random uncertainty, otherwise the observed information would be incompatible with the predicted information. The observation update becomes
\eqns{
p_{t|t}(B) = \dfrac{\int \ind{B}(x) h(x) R_t(\d h) p_{t|t-1}(\d x)}{\int h(x) R_t(\d h) p_{t|t-1}(\d x)},
}
for any $B \in \calB(\boX_t)$, so that both the predicted and the updated uncertainties take the form of probability measures on the state space and only the observed information takes a more general form. This approach has been previously proposed in the context of random set theory \cite{Mahler2007} and has also been used for multi-target tracking within the proposed framework in \cite{Houssineau2016_HISP,Delande2017}.

As an example, if we assume that $R_t$ takes the form $R_t = \sum_{l \in L_t} v^l_t \delta_{h_t^l}$ for some index set $L_t$ and some collections of weights $\{v^l_t\}_{l \in L_t}$ and functions $\{h_t^l\}_{l \in L_t}$, then it holds that
\eqns{
p_t(\d x) = \dfrac{\sum_{l \in L_t} v^l_t h^l_t(x) p_{t|t-1}(\d x)}{\sum_{l \in L_t} v^l_t \int h^l_t(x) p_{t|t-1}(\d x)}.
}
The recursion based on the standard prediction \eqref{eq:stdPrediction} and this update can be computed using sequential Monte Carlo methods where the likelihood is replaced by a potential $\sum_{l\in L_t} v^l_t h^l_t$.

\section{Concluding Remarks}

Building on a recently introduced framework for the representation of uncertainty \cite{Houssineau2015,Houssineau2016_dataAssimilation}, it has been demonstrated that filtering and smoothing algorithms can be generalised to outer measures belonging to a specific class based on functional integrals of supremums. An important observation was that the structure of the usual filtering and smoothing equations reappears in the generalised recursion under the form of possibility functions upon which the outer measures were defined. Simplifications to finite sums of possibility functions as well as single possibility functions have been studied and gave results that were not only intuitive but also implementable. The recursion in terms of mean and spread of the Kalman filter has been recovered by considering appropriately defined Gaussian possibility functions, giving yet another setting in which the Kalman filter appears naturally. 

Future work may include the application of Monte Carlo-like methods to the proposed estimation framework. The results obtained in this article raise numerous other questions, both of a practical and theoretical nature. For example:
\begin{enumerate}
\item Given the result of \cref{thm:Kalman}, it is natural to inquire about inference in the non-Gaussian case. In the standard approach, the most straightforward generalisation is based on Gaussian mixtures \cite{Sorenson1971,Alspach1972} of the form
\eqns{
\sum_{i=1}^N \tilde{w}_i \calN(x; m_i, \bsP_i)
}
for some integer $N$ and some collections $\{\tilde{w}_i\}_{i=1}^N$, $\{m_i\}_{i=1}^N$, $\{\bsP_i\}_{i=1}^N$ of $[0,1]$-valued scalars, states in $\bbR^d$ and $d\times d$ positive definite matrices respectively. In particular, it holds that $\sum_{i=1}^N \tilde{w}_i = 1$. In the considered framework, mixtures become \emph{max-mixtures} and take the form
\eqns{
\max_{1 \leq i \leq N} w_i \bar\calN(x; m_i, \bsP_i)
}
with $\{w_i\}_{i=1}^N$ a collection of $[0,1]$-valued scalars such that $\max_{1\leq i\leq N} w_i = 1$. Inference for these max-mixtures requires adequate mixture reduction techniques.
\item As mentioned shortly after \cref{def:GaussianPossibility}, the parameters $m$ and $\bsP$ in the Gaussian possibility function $\bar\calN(\cdot; m, \bsP)$ are referred to as mean and spread only as a useful abuse of language. It would however be important, both from the theoretical and practical viewpoints, to formally introduce these concepts. In particular, the law of large numbers and central limit theorem, assuming they can be reformulated to suit outer measures, would provide insight and theoretical backup for a meaningful generalised definition of the concepts of mean and variance.
\end{enumerate}

\section*{Acknowledgements}

J.~Houssineau was with Data61 (CSIRO) and is now with the National University of Singapore. A.N.~Bishop is with the University of Technology Sydney (UTS) and Data61 (CSIRO). He is also an adjunct Fellow at the Australian National University. He is supported by Data61 and the Australian Research Council (ARC) via a Discovery Early Career Researcher Award (DE-120102873). His work was partly supported by DST Group under TTCP CREATE (2017).

\bibliographystyle{siamplain}
\bibliography{Uncertainty}

\end{document}